\newcommand{\gren}{$\gamma$REN}
\newtheorem{definition}{Definition}
\newtheorem{theorem}{Theorem}
\newtheorem{remark}{Remark}
\newtheorem{example}{Example}
\newtheorem{lemma}[theorem]{Lemma}
\title{\LARGE \bf
Learning Over Contracting and Lipschitz Closed-Loops for Partially-Observed Nonlinear Systems (Extended Version)
}
\author{Nicholas H. Barbara, Ruigang Wang, and Ian R. Manchester
\thanks{*This work was supported in part by the Australian Research Council, and the NSW Defence Innovation Network.}
\thanks{The authors are with the Australian Centre for Robotics (ACFR), and the School of Aerospace, Mechanical and Mechatronic Engineering, The University of Sydney, Sydney, NSW 2006, Australia
        {\tt\small nicholas.barbara@sydney.edu.au}}%
}
\begin{document}

%
%
\maketitle
\thispagestyle{empty}
\pagestyle{empty}

%
%
\begin{abstract}
 This paper presents a policy parameterization for learning-based control on nonlinear, partially-observed dynamical systems.  The parameterization is based on a nonlinear version of the Youla parameterization and the recently proposed \textit{Recurrent Equilibrium Network} (REN) class of models. We prove that the resulting Youla-REN parameterization automatically satisfies stability (contraction) and user-tunable robustness (Lipschitz) conditions on the closed-loop system. This means it can be used for safe learning-based control with no additional constraints or projections required to enforce stability or robustness. We test the new policy class in simulation on two reinforcement learning tasks: 1) magnetic suspension, and 2) inverting a rotary-arm pendulum. We find that the Youla-REN performs similarly to existing learning-based and optimal control methods while also ensuring stability and exhibiting improved robustness to adversarial disturbances.
\end{abstract}

%
%
\section{Introduction} \label{sec:intro}

Deep reinforcement learning (RL) has been a driving force behind many recent successes in learning-based control, with applications ranging from discrete game-like problems \cite{Mnih++2015,Silver++2017} to robotic locomotion \cite{Siekmann++2021a}. As its popularity continues to grow, there is increasing need for a learning framework that offers the stability and robustness guarantees of classical control methods while still being fast and flexible for learning in complex environments \cite{Hu++2023}. 

One promising idea is to directly learn over a set of robustly stabilizing controllers for a given dynamical system. RL policies are then guaranteed to naturally satisfy robustness and stability requirements even during training. Parameterizing the space of all such controllers is well-studied for linear systems \cite{Anderson1998}. In fact, learning over this space results in policies that perform better and are more robust than those learned from typical RL frameworks which do not consider stability \cite{Roberts++2011}. While there has been considerable work extending this parameterization to nonlinear systems with full state knowledge \cite{Imura+Yoshikawa1997,Furieri++2023} or specific structures \cite{Moore+Irlicht1992,Blanchini++2010}, the problem of general \textit{partially-observed} nonlinear systems (full state information unavailable) is more challenging. 
In this paper, we present a parameterization of robust stabilizing controllers for partially-observed nonlinear systems that can be readily applied to learning-based control.

\subsection{Previous work on linear systems} \label{sec:intro-linear}

We recently proposed the Youla-REN policy class for learning over all stabilizing controllers for partially-observed \textit{linear} systems \cite{Wang+Manchester2022,Wang++2022}. It combines the classical Youla-Kucera parameterization with the \textit{Recurrent Equilibrium Network} (REN) model architecture \cite{Revay++2023}. The Youla parameterization is an established idea in linear control theory that represents all stabilizing linear controllers for a given linear system \cite{Youla++1976}. One common construction augments an existing stabilizing ``base'' controller with a stable linear parameter $\mathcal{Q}$, which is optimized to improve the closed-loop system performance \cite{Li++2021,Mahtout2020}. We extended this idea in \cite{Wang++2022}, showing that if $\mathcal{Q}$ is a contracting and Lipschitz \textit{nonlinear} system, the Youla parameterization represents all stabilizing \textit{nonlinear} controllers for a given linear system.

A key feature of our work in \cite{Wang++2022} was using RENs for the Youla parameter $\mathcal{Q}$. The \textit{direct parameterization} presented in \cite{Revay++2023} allowed us to construct RENs that universally approximate all contracting and Lipschitz systems. This meant that we could use unconstrained optimization to train the Youla-REN. Training RENs in this way is less conservative than weight-rescaling methods such as \cite{Miller+Hardt2018}, and significantly faster than solving large semi-definite programs during training \cite{Pauli++2022} or projected gradient descent methods such as \cite{Knight+Anderson2011, Junnarkar++2023}. Our aim is to retain this computational efficiency when extending our framework to learning stabilizing controllers for nonlinear systems.
    
\subsection{Youla parameterization for nonlinear systems} \label{sec:intro-nl-youla}

Significant theoretical advances were made in the 1980s-90s to extend the Youla parameterization to partially-observed nonlinear systems. Early work by \cite{Paice+Moore1990,Chen+deFigueiredo1990} addressed the problem by using left coprime factorizations. However, state-space models of left coprime factors are limited to  nonlinear systems with specific structures \cite{Moore+Irlicht1992}. A more extensive framework was proposed by \cite{Paice+vanderSchaft1994,Paice+vanderSchaft1996,Fujimoto+Sugie2000,Fujimoto+Sugie1998} using kernel representations to parameterize all stabilizing controllers for nonlinear systems based on input-to-state stability.
Despite providing a rather general result, kernel representations are often not intuitive to work with in practical learning-based control.
Moreover, the focus of these works was on stabilizing a particular equilibrium state. Many applications require control systems that track a wide range of reference trajectories. This motivates the need for a framework with a sufficiently strong and flexible notion of stability that is also intuitive to implement in practice.

\subsection{Contributions}
In this paper, we extend the Youla-REN policy class proposed in \cite{Wang+Manchester2022,Wang++2022} to nonlinear systems and address the open questions outlined in Section~\ref{sec:intro}. In particular, we:
\begin{enumerate}
    \item Parameterize robust (Lipschitz) stabilizing feedback controllers for \textit{partially-observed} nonlinear systems based on a strong stability notion --- contraction.
    \item Demonstrate that the Youla-REN can be applied to learning-based control by investigating its performance on  two simulated RL tasks: 1) magnetic suspension and 2) inverting a rotary-arm pendulum.
    \item Show through simulation that we can intuitively tune the trade-off between performance and robustness of the learned policy via the direct parameterization of RENs.
\end{enumerate}

\subsection{Notation}
Consider the set of sequences $\ell_{2e}^n = \{x \mid x: \mathbb{N} \rightarrow \mathbb{R}^n\}$, where $n$ is omitted if it can be inferred from the context. For any $x \in \ell_{2e}^n$, write $x_t \in \mathbb{R}^n$ for the value of the sequence at time $t \in \mathbb{N}$. We define the subset $\ell_2 \subset \ell_{2e}^n$ as the set of all square-summable sequences such that $x\in \ell_2 \Leftrightarrow \|x\| := \sqrt{\sum_{t=0}^{\infty} |x_t|^2}$ is finite, where $|\cdot|$ denotes the Euclidean norm. We also define the norm of the truncation of $x\in\ell_2$ over $[0,T]$ as $\|x\|_T := \sqrt{\sum_{t=0}^{T} |x_t|^2}$ for all $T\in\mathbb{N}$ such that $x\in \ell_{2e}^n \iff \|x\|_T$ is finite. A function $f:\mathbb{R}^n\rightarrow \mathbb{R}^m$ is Lipschitz continuous if there exist a constant $\gamma\in\mathbb{R}^+$ such that $|f(a)-f(b)|\leq \gamma |a-b|, \forall a, b\in \mathbb{R}^n$.

\subsection{Definitions}
They main results in this paper concern the analysis of contracting (``stable'') and Lipschitz (``robust'') dynamical systems. Consider a system
\begin{equation} \label{eqn:generic-system}
    \mathcal{T}: 
    \begin{cases}
        x_{t+1} = f(x_t, u_t) \\
        y_t = h(x_t, u_t)
    \end{cases}
    \end{equation}
with state $x_t\in\mathbb{R}^n$, inputs $u_t\in\mathbb{R}^m$, and outputs $y_t \in \mathbb{R}^p$.

\begin{definition}[Contraction] \label{dfn:contraction}
$\mathcal{T}$ is contracting if there exists a smooth (continuously differentiable) function $V:\mathbb{R}^n \times \mathbb{R}^n \rightarrow \mathbb{R}^+$ which, for any fixed input sequence $u \in \ell_{2e}^m$, satisfies
\begin{align}
    c_1 \abs{x^1_t - x^2_t}^2 \le V(x^1_t, &x^2_t) \le c_2 \abs{x^1_t - x^2_t}^2 \label{eqn:contract-1}\\
    V(x^1_{t+1}, x^2_{t+1}) &\le \alpha V(x^1_t, x^2_t)\label{eqn:contract-2}
\end{align}
where $c_2 \ge c_1 > 0$ and $\alpha \in [0,1)$ is the contraction rate.
\end{definition}

Intuitively, a contracting system is one that exponentially forgets its initial conditions. We introduce a slightly weaker notion of contraction based only on exponential convergence. 

\begin{definition}[Contraction with transients] \label{dfn:contraction-transients}
    $\mathcal{T}$ is contracting with transients at rate $\alpha \in [0,1)$ if for any two initial conditions $x_0^1, x_0^2 \in \mathbb{R}^n$ and for any fixed input sequence $u \in \ell_{2e}^m$, there exists $\beta:\mathbb{R}^n\times \mathbb{R}^n\rightarrow \mathbb{R}^+$ such that
    \begin{equation} \label{eqn:contraction}
        |x^1_t - x^2_t| \le \beta(x_0^1,x_0^2) \cdot \alpha^t \qquad \forall \, t \in \mathbb{N}.
    \end{equation}
    Note that the overshoot $\beta$ is a function of initial conditions.
\end{definition}

\begin{definition}[Lipschitz with transients] \label{dfn:lipschitz-sys}
    A system $\mathcal{T}$ is Lipschitz with transients (referred to simply as \textit{Lipschitz} in this paper) if for any two input sequences $u^1, u^2 \in \ell_{2e}^m$ and initial conditions $x^1_0, x^2_0 \in \mathbb{R}^n$ we have 
    \begin{equation} \label{eqn:lipschitz-sys}
        \norm{y^1 - y^2}_T \le \gamma \norm{u^1 - u^2}_T + \kappa(x^1_0, x^2_0) \ \forall \, T \in \mathbb{N}
    \end{equation}
    where $\gamma \in \mathbb{R}^+$ is the Lipschitz constant and $\kappa(x^1_0, x^2_0) \ge 0$.
\end{definition}

A system with a smaller Lipschitz bound is more robust to sudden changes in its inputs.
Adding the condition $\kappa(a,a) = 0$ for any $a \in \mathbb{R}^n$ would recover the definition for a system with an incremental $\ell^2$ gain bound of $\gamma$ from \cite{Revay++2023,Revay++2021a}.

%
%

\section{Problem statement}

Consider a nonlinear dynamical system $\mathcal{G}$ described by
\begin{equation} \label{eqn:system-G}
    \mathcal{G}: 
    \begin{cases}
        x_{t+1} = f(x_t, u_t + r_t) \\
        y_t = c(x_t)
    \end{cases}
\end{equation}
with internal states $x_t \in \mathbb{R}^n$, controlled inputs $u_t \in \mathbb{R}^m$, and measured outputs $y_t \in \mathbb{R}^p$. The control signal is perturbed by a known, bounded input $r_t \in \mathbb{R}^m$ (e.g., a reference signal) satisfying $\abs{r_t} \le \overline{r}$ for all $t\in\mathbb{N}$ with $\overline{r}\in\mathbb{R}^+$. We denote the total inputs as $ \bar{u}_t= u_t  + r_t$. The exogenous inputs and controlled outputs are $r_t$ and $z_t = (x_t^\top, u_t^\top)^\top$, respectively. 

Our aim is to learn feedback controllers of the form $u = \mathcal{K}_\theta (y)$ where $\theta$ is a learnable parameter. Controllers may be nonlinear and dynamical systems themselves. The closed-loop system of $\mathcal{G}$ and $\mathcal{K}_\theta$ should satisfy the following stability, robustness, and performance criteria (respectively):
\begin{enumerate}
    \item The closed-loop system is contracting (with transients) such that initial conditions are forgotten exponentially.
    \item The closed-loop response to external inputs (the map $r \mapsto z$) is Lipschitz.
    \item The controller $\mathcal{K}_\theta$ at least locally and approximately minimizes a cost function of the form
    \begin{equation} \label{eqn:general-cost-func}
        J_\theta = E \left[ \sum_{t=0}^{T-1} g(x_t, u_t) + g_T(x_T) \right]
    \end{equation}
    where the expectation is over $x_0$ and external inputs.
\end{enumerate}

%
%
\section{A nonlinear Youla parameterization} \label{sec:theory}

\subsection{The Youla architecture}

Suppose $\mathcal{G}$ is in feedback with a \textit{base controller} $\mathcal{K}_b$ consisting of an observer and state-feedback controller:
\begin{equation} \label{eqn:base-ctrl}
    \mathcal{K}_b:
    \begin{cases}
        \hat{x}_{t+1} = f_o(\hat{x}_t, u_t+r_t, y_t) \\
        \bar{u}_t = k(\hat{x}_t)
    \end{cases}
\end{equation}
where $\hat{x}_t \in \mathbb{R}^n$ is the estimated value of the true state $x_t$, and the observer system $\mathcal{O}(\bar{u}, y)$ with state vector and output $\hat{x}$ is define by the first equation in \eqref{eqn:base-ctrl}. Denote the predicted measurements as $\hat{y}_t = c(\hat{x}_t)$ where $\hat{y} \in \mathbb{R}^p$. Our proposed controller parameterization (Fig.~\ref{fig:youla-sys}) augments the base controller with a (possibly nonlinear) system $\mathcal{Q}: (r, \tilde{y}) \mapsto \tilde{u}$, where $\tilde{y}_t = y_t - \hat{y}_t$ are the innovations. Specifically, the augmented controller is
\begin{equation} \label{eqn:youla-ctrl}
    \mathcal{K}_\mathcal{Q}: 
    \begin{cases}
        \hat{x}_{t+1} = f_o(\hat{x}_t, u_t + r_t, y_t) \\
        \bar{u}_t = k(\hat{x}_t) +\tilde{u}_t
    \end{cases}
\end{equation}
with 
\begin{equation} \label{eqn:youla-param}
    \mathcal{Q}: 
    \begin{cases}
        q_{t+1} = f_q(q_t, r_t, \tilde{y}_t) \\
        \tilde{u}_t = h_q(q_t, r_t, \tilde{y}_t),
    \end{cases}
    q_t\in\mathbb{R}^q.
\end{equation}
Here $\mathcal{K}_{\mathcal{Q}}$ is a nonlinear version of the Youla-Kucera parameterization, where $\mathcal{Q}$ is the Youla parameter.
\begin{figure}[t]
    \centering
    \includegraphics[width=0.4\textwidth]{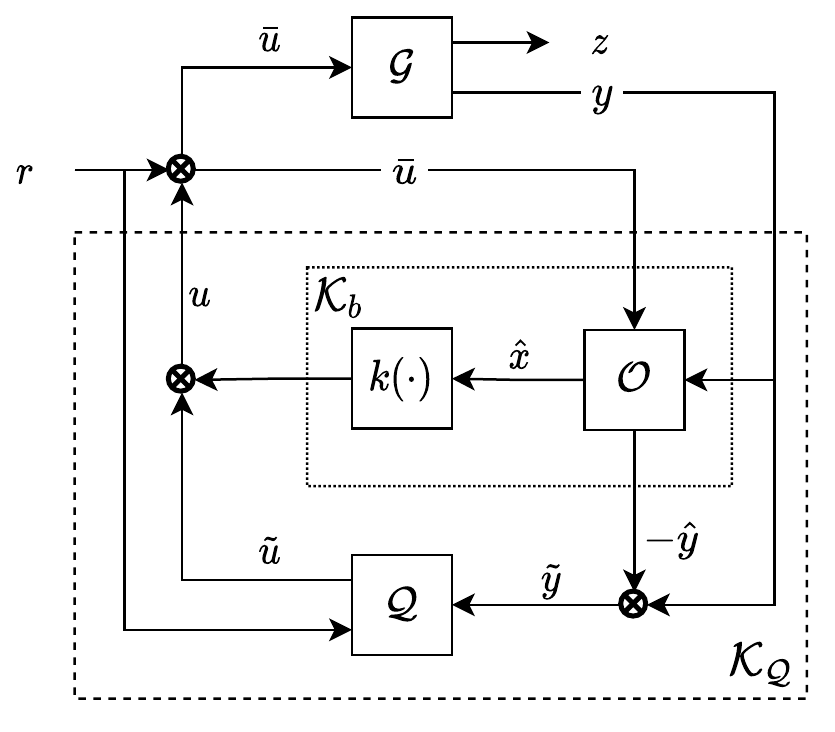}
    \caption{A version of the Youla-Kucera parameterization, where the Youla parameter $\mathcal{Q}$ augments an observer-based feedback controller $\mathcal{K}_b$.}
    \label{fig:youla-sys}
\end{figure}

\subsection{Assumptions}

We make the following assumptions, drawing inspiration from \cite{Fujimoto+Sugie2000, Yi++2022}:
\begin{enumerate}[label=\textbf{A\arabic*)}]
    \item \textit{Robustly stabilizing base controller:} The closed-loop system composed of $\mathcal{G}$ in feedback with $\mathcal{K}_b$ is contracting and the map $r \mapsto z$ is Lipschitz.
    \label{A1}
    \item \textit{Observer correctness:} Given $\hat{x}_0 = x_0$, the observer exactly replicates the plant dynamics. That is, $f(x_t, \bar{u}_t) = f_o(x_t, \bar{u}_t, c(x_t)) \ \forall t\in\mathbb{N}$.
    \label{A2}
    \item \textit{Contracting \& Lipschitz observer:} The observer $\mathcal{O}$ is contracting and the map $(\bar{u},y)\mapsto \hat{x}$ is Lipschitz.
    \label{A3}
    \item \textit{Smooth maps:} All functions are Lipschitz continuous and $f$ in \eqref{eqn:system-G} is continuously differentiable in $x$ on $\mathbb{R}^n$. 
    \label{A4}
\end{enumerate}

\subsection{Theoretical results}

Our first main result is that augmenting a robustly stabilizing controller with a contracting and Lipschitz $\mathcal{Q}$ ensures the closed-loop system will also be contracting and Lipschitz. This allows optimization of the closed-loop response via $\mathcal{Q}$ while maintaining stability and robustness guarantees. 
\begin{theorem} \label{thm:forward}
    Suppose that assumptions \ref{A1} to \ref{A4} hold and the Youla parameter $\mathcal{Q}$ is contracting and Lipschitz. Then the closed-loop system of $\mathcal{G}$ in \eqref{eqn:system-G} and $\mathcal{K}_\mathcal{Q}$ in \eqref{eqn:base-ctrl} to \eqref{eqn:youla-param} is contracting \textit{with transients} and the map $r \mapsto z$ is Lipschitz.
\end{theorem}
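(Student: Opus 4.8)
The plan is to exploit the defining feature of the Youla architecture: the innovations $\tilde{y}_t = c(x_t)-c(\hat{x}_t)$, which together with $r$ drive $\mathcal{Q}$, decay to zero at the observer's contraction rate \emph{independently} of the control signal $\tilde{u}$. This turns what appears to be a feedback interconnection into an effective cascade in which the innovations inject only transient (initial-condition-dependent) energy, so that neither contraction nor Lipschitzness requires a small-gain condition.

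First I would analyze the estimation error $e_t = x_t - \hat{x}_t$. By observer correctness \ref{A2}, the true state $x_t$ is itself a trajectory of the observer dynamics $\mathcal{O}$ driven by the realized input pair $(\bar{u},y)$ and started from $x_0$ (a one-line induction using $f(x_t,\bar{u}_t)=f_o(x_t,\bar{u}_t,c(x_t))$), while $\hat{x}_t$ is the trajectory of the same system with the same input pair but started from $\hat{x}_0$. Invoking the contraction of $\mathcal{O}$ from \ref{A3} in the sense of Definition~\ref{dfn:contraction} then gives $|e_t| \le \sqrt{c_2/c_1}\,(\sqrt{\alpha_o})^{t}\,|x_0-\hat{x}_0|$ with rate and overshoot \emph{uniform over all inputs}. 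Since $c$ is Lipschitz by \ref{A4}, the innovations obey $|\tilde{y}_t|\le L_c|e_t|$, so $\|\tilde{y}\|_T \le \kappa_y(x_0,\hat{x}_0)$ for every $T$, with $\kappa_y$ depending only on the initial error and not on $r$ or $\tilde{u}$. This is the crucial decoupling step: the feedback channel through $\mathcal{Q}$ carries no $r$-dependent gain around the loop.

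Next I would treat the plant together with the base controller as a single system $\mathcal{P}:(r,\tilde{u})\mapsto z$, and establish that it is contracting (for every fixed input, by \ref{A1}) and that its input-output map is Lipschitz; extending the Lipschitz bound of \ref{A1} to the additional additive input $\tilde{u}$ uses that all maps are Lipschitz (\ref{A4}) together with contraction, i.e.\ that a contracting system with Lipschitz dynamics has finite incremental gain. With $\mathcal{Q}$ assumed contracting and Lipschitz, the closed loop becomes the cascade $e\to\tilde{y}\to(\mathcal{Q})\to\tilde{u}\to(\mathcal{P})\to z$, in which the only feedback, $\tilde{u}\to e$, is harmless precisely because the error bound above is input-independent. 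For the Lipschitz claim I would compose stage by stage: because $\|\tilde{y}^1-\tilde{y}^2\|_T$ is bounded purely by an initial-condition term, the bound for $\mathcal{Q}$ gives $\|\tilde{u}^1-\tilde{u}^2\|_T \le \gamma_q\|r^1-r^2\|_T + \kappa_{\tilde{u}}$, and feeding this into the bound for $\mathcal{P}$ yields $\|z^1-z^2\|_T \le \gamma\|r^1-r^2\|_T + \kappa$. For contraction with transients I would instead fix $r$, take two initial conditions, and propagate exponential convergence of the state differences through the same three stages.

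The main obstacle, and the step I would treat most carefully, is the repeatedly-used lemma that a system which is contracting (Definition~\ref{dfn:contraction}) and Lipschitz (Definition~\ref{dfn:lipschitz-sys}) and is driven by two inputs whose difference decays exponentially — from possibly different initial states — produces state and output trajectories whose difference also decays exponentially, with overshoot depending on the initial conditions. I would prove this using the contraction metric $V^{1/2}$: the same-input inequality \eqref{eqn:contract-2}, perturbed by a Lipschitz term in the input difference, yields a recursion of the form $d_{t+1}\le \sqrt{\alpha}\,d_t + L\,|\Delta u_t|$, and a discrete variation-of-constants argument (the convolution of two geometric sequences) then gives exponential decay of $d_t$. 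Applying this lemma across the three cascade stages, and carefully collecting the initial-condition-dependent overshoot and transient terms into the single $\beta$ and $\kappa$ functions required by Definitions~\ref{dfn:contraction-transients} and~\ref{dfn:lipschitz-sys}, completes the argument.
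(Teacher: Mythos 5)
Your overall strategy matches the paper's proof: the observer-correctness argument making the innovations decay at the observer's contraction rate independently of $r$ and $\tilde{u}$, the resulting cascade view, the stage-by-stage Lipschitz composition in which the innovations contribute only an initial-condition term, and the key lemma on contracting systems driven by exponentially converging inputs (the paper's Lemma~\ref{lem:exp-convergence}) all appear in the paper, down to the variation-of-constants/geometric-convolution proof of that lemma. However, there is a genuine hole in exactly the step you flag as the main obstacle. Your recursion $d_{t+1}\le\sqrt{\alpha}\,d_t+L\,\abs{\Delta u_t}$ requires converting an input perturbation into an additive Lipschitz term measured in the contraction certificate, but the $V$ of Definition~\ref{dfn:contraction} is only smooth (hence locally Lipschitz), and $\sqrt{V}$ need not satisfy any triangle inequality, so no constant $L$ exists a priori: the cross term $V(f(x_t,u_t,r_t),f(z_t,v_t,r_t)) - V(f(x_t,u_t,r_t),f(z_t,u_t,r_t))$ can only be bounded by $\gamma_V\gamma_f\abs{u_t-v_t}$ if the trajectories being compared remain in a set on which a single local Lipschitz constant $\gamma_V$ of $V$ is valid. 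The paper closes this with a separate, nontrivial result (Lemma~\ref{lem:bounded-states}): a contracting system with bounded inputs has bounded states, proved via converse-contraction machinery (constructing a metric $M(x)=\Theta(x)^\top\Theta(x)$ and invoking a bounded-perturbation result). This is also precisely where the standing assumption $\abs{r_t}\le\overline{r}$ and the boundedness of the exponentially decaying innovations get used. Your hypothesis that the system is Lipschitz in the sense of Definition~\ref{dfn:lipschitz-sys} does not substitute for this: that is an $\ell_2$-type input--output bound and cannot produce the pointwise exponential recursion you need.

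A secondary, easily repaired point: to make the base loop $\mathcal{P}:(r,\tilde{u})\mapsto z$ Lipschitz in $\tilde{u}$ you invoke the general claim that a contracting system with Lipschitz dynamics has finite incremental gain. That claim is both unproven in your sketch (it needs the same bounded-state machinery as above, and in an $\ell_2$ rather than exponential form, which the paper never establishes either) and unnecessary: in this architecture $\tilde{u}$ enters the plant additively at exactly the same point as $r$, since $\bar{u}_t = k(\hat{x}_t)+\tilde{u}_t+r_t$, so assumption \ref{A1} applied to the combined input $r+\tilde{u}$ immediately gives $\norm{\Delta z}_T \le \gamma\norm{\Delta r}_T + \gamma\norm{\Delta\tilde{u}}_T + \kappa_1$, which is how the paper proceeds. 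Routing this step through an unproven incremental-gain claim instead of the additive structure is the one place where your argument takes on a proof obligation the paper's does not.
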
 

The proof is provided in the appendix and can be summarized as follows.
The observer error $\tilde{x}_t := x_t - \hat{x}_t$ exponentially converges to zero since the plant trajectory $x$ is a particular solution of the observer, which is a contracting and Lipschitz system. This occurs regardless of $r_t$ and $\tilde{u}_t$. To prove contraction with transients, we show that the states of a contracting system exponentially converge given exponentially converging inputs, and apply this to $\mathcal{Q}$ and the closed-loop system under $\mathcal{K}_b$ (``base system''). We repeatedly apply Lipschitz properties of $\mathcal{Q}$ and the base system  to prove that the closed-loop system under $\mathcal{K}_\mathcal{Q}$ is also Lipschitz.

One interesting question is whether the converse holds: that is, can a contracting and Lipschitz closed loop always be parameterized by a contracting and Lipschitz $\mathcal{Q}$? Our second main result shows that this is true under additional assumptions. We consider a perturbed nonlinear system 
\begin{equation} \label{eqn:plant-disturbed}
    \mathcal{G}_d: 
    \begin{cases}
        x_{t+1} = f(x_t, u_t + r_t) + d_{x_t}\\
        y_t = c(x_t) + d_{y_t}
    \end{cases}
\end{equation}
where $d_{x_t}\in \mathbb{R}^n$ and $d_{y_t}\in\mathbb{R}^p$ are additive process and measurement disturbances, respectively. The following is a stronger version of assumption \ref{A1}.
\begin{enumerate}[label=\textbf{A\arabic*)}]\addtocounter{enumi}{4}
    \item \textit{Robustness to disturbances:} The closed-loop system $(r, d_x, d_y) \mapsto z$ composed of $\mathcal{G}_d$ in feedback with $\mathcal{K}_b$ is contracting and Lipschitz.
    \label{A5}
\end{enumerate}

\begin{theorem} \label{thm:reverse}
    Suppose that assumptions \ref{A2} to \ref{A5} hold. Then, any controller $\mathcal{K}$ forming a contracting and Lipschitz closed-loop map $(r, d_x, d_y) \mapsto z$ with $\mathcal{G}_d$ in \eqref{eqn:plant-disturbed} can be represented by \eqref{eqn:base-ctrl} to \eqref{eqn:youla-param} with contracting and Lipschitz $\mathcal{Q}$.
\end{theorem}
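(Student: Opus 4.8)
The plan is to prove the converse by \emph{explicit construction}: given a controller $\mathcal{K}$ that closes a contracting and Lipschitz loop with $\mathcal{G}_d$, I exhibit a contracting and Lipschitz Youla parameter $\mathcal{Q}$ such that $\mathcal{K}_\mathcal{Q}$ reproduces $\mathcal{K}$ exactly. Inside $\mathcal{Q}$ I run a copy of the observer $\mathcal{O}$ with state $\xi$ and a copy of $\mathcal{K}$, both driven by the \emph{reconstructed} measurement $w_t = c(\xi_t) + \tilde y_t$. The copy of $\mathcal{K}$ produces $u_t = \mathcal{K}(w)_t$; I set the reconstructed total input $\bar u_t = u_t + r_t$, feed it to the observer copy via $\xi_{t+1} = f_o(\xi_t, \bar u_t, w_t)$, and output the control correction $\tilde u_t = \bar u_t - k(\xi_t)$. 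This defines $f_q, h_q$ in \eqref{eqn:youla-param} with state $q = (\xi, \text{state of }\mathcal{K})$ and inputs $(r, \tilde y)$.

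First I would verify the representation. Initializing the embedded observer and the base-controller observer identically and using \ref{A2}, an induction on $t$ gives $\xi_t = \hat x_t$ for all $t$, whence $w_t = c(\hat x_t) + \tilde y_t = \hat y_t + (y_t - \hat y_t) = y_t$; the embedded copy of $\mathcal{K}$ therefore sees exactly $y$ and reproduces $u = \mathcal{K}(y)$, with $\bar u_t = k(\hat x_t) + \tilde u_t$ as required by \eqref{eqn:youla-ctrl}. Hence $\mathcal{K}_\mathcal{Q}$ is input--output equivalent to $\mathcal{K}$, so the closed loop of $\mathcal{G}_d$ with $\mathcal{K}_\mathcal{Q}$ coincides with that of $\mathcal{G}_d$ with $\mathcal{K}$ and is contracting and Lipschitz by hypothesis.

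The substantive step is to show the constructed $\mathcal{Q}$ is itself contracting and Lipschitz. Using \ref{A2} I rewrite the embedded observer as $\xi_{t+1} = f(\xi_t, \bar u_t) + d_{x_t}$ with induced process disturbance $d_{x_t} = f_o(\xi_t, \bar u_t, c(\xi_t) + \tilde y_t) - f_o(\xi_t, \bar u_t, c(\xi_t))$ and induced measurement disturbance $d_{y_t} = \tilde y_t$. By \ref{A3}--\ref{A4}, $|d_{x_t}| \le L|\tilde y_t|$ for a Lipschitz constant $L$ of $f_o$, so the internal dynamics of $\mathcal{Q}$ are precisely the closed loop of $\mathcal{G}_d$ in \eqref{eqn:plant-disturbed} with $\mathcal{K}$, driven through $(r, d_x, d_y)$ by the innovation $\tilde y$. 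I would then invoke the hypothesis (contraction and Lipschitzness of $(r,d_x,d_y)\mapsto z$ for $\mathcal{G}_d + \mathcal{K}$) together with \ref{A5} for the base loop $\mathcal{G}_d + \mathcal{K}_b$, viewing $\tilde u = \bar u - k(\xi)$ as the control correction relative to the base controller, to transfer contraction and the Lipschitz bound onto the map $(r, \tilde y) \mapsto \tilde u$.

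The hard part is precisely this transfer, because the induced $d_x$ depends on $\mathcal{Q}$'s own state rather than being a free input. When comparing two trajectories of $\mathcal{Q}$ under a common $(r, \tilde y)$, the measurement disturbance $d_y = \tilde y$ agrees but the process disturbances $d_x^1, d_x^2$ do not, so the bare contraction of $\mathcal{G}_d + \mathcal{K}$ (which fixes all inputs) does not apply, and the naive Lipschitz estimate of $\|d_x^1 - d_x^2\|_T$ in terms of $\|\xi^1 - \xi^2\|_T$ closes only under a small-gain condition that is \emph{not} assumed. I expect to circumvent this by not treating $d_x$ as an exogenous difference: instead I would use the observer's intrinsic contraction and Lipschitzness (\ref{A3}) to govern the $\xi$-subsystem, couple it in cascade with the contracting $\mathcal{K}$-loop via the lemma that a contracting system driven by exponentially converging inputs has exponentially converging states (already used in \ref{thm:forward}), and exploit the magnitude bound $|d_x| \le L|\tilde y|$ so that a bounded or converging innovation yields a bounded or converging $d_x$. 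Expressing $\mathcal{Q}$ as the difference between the $\mathcal{K}$-loop and the $\mathcal{K}_b$-loop, both robust by the hypothesis and \ref{A5}, is what I anticipate lets the estimate close without a small-gain assumption.
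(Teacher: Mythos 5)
Your construction of $\mathcal{Q}$ is exactly the paper's $\mathcal{Q}_\mathcal{K}$: embed a copy of the observer and of $\mathcal{K}$, reconstruct the measurement as $w_t = c(\xi_t) + \tilde{y}_t$, and output $\tilde{u}_t = \bar{u}_t - k(\xi_t)$; your induction showing $\xi_t = \hat{x}_t$ and $w_t = y_t$ is the paper's remark that $\mathcal{Q}_\mathcal{K}$ ``does not change the control signal by construction.'' The divergence is at the decisive step. The paper establishes contraction and Lipschitzness of $\mathcal{Q}_\mathcal{K}$ by precisely the transfer you flag as problematic: it substitutes $\hat{x}_t \leftrightarrow x_t$, $\tilde{y}_t \leftrightarrow d_{y_t}$, $\Delta f_o \leftrightarrow d_{x_t}$, asserts via \ref{A2} that the state dynamics of $\mathcal{Q}_\mathcal{K}$ and $\bar{\mathcal{G}}_\mathrm{CL}$ are then identical, and carries the assumed properties of $(r,d_x,d_y)\mapsto z$ over verbatim, finishing with the observation that $z \mapsto \tilde{u}$ is a static Lipschitz map since $\tilde{u} = u - k(\hat{x})$. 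It never forms the ``difference of the $\mathcal{K}$-loop and the $\mathcal{K}_b$-loop'' that you invoke; assumption \ref{A5} serves only to furnish the robust base controller and observer that define the architecture \eqref{eqn:base-ctrl}--\eqref{eqn:youla-param}. Your objection --- that the induced $d_x = \Delta f_o$ depends on $\mathcal{Q}$'s own state, so two trajectories of $\mathcal{Q}$ under a common $(r,\tilde{y})$ correspond to trajectories of $\bar{\mathcal{G}}_\mathrm{CL}$ with \emph{different} disturbance inputs --- is a perceptive and genuine point about this step; note that it disappears exactly when the observer has innovations form $f_o(\hat{x},\bar{u},y) = f(\hat{x},\bar{u}) + \rho(y - c(\hat{x}))$, since then $\Delta f_o = \rho(\tilde{y})$ is state-independent and the substitution is a true exogenous reparameterization.

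The problem is that your proposed repair does not close the gap you identified, so your proof is incomplete at the point you yourself call the hard part. First, the embedded observer is driven by $w = c(\xi) + \tilde{y}$, which depends on $\xi$, so \ref{A3} --- contraction for \emph{fixed} input sequences --- suffers the identical circularity and cannot ``govern the $\xi$-subsystem.'' Second, there is no cascade to exploit: $\xi$ drives $\phi$ through $w$, and $\phi$ drives $\xi$ through $\bar{u} = g(\phi,w) + r$, a genuine feedback loop. Lemma~\ref{lem:exp-convergence} needs an input that converges exponentially for exogenous reasons; in Theorem~\ref{thm:forward} that role is played by the observer error, which vanishes because the plant trajectory is a particular solution of the observer, but inside $\mathcal{Q}_\mathcal{K}$ nothing converges a priori --- here $\tilde{y}$ is an arbitrary free input, not a vanishing innovation. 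Third, the magnitude bound $\abs{d_{x_t}} \le L \abs{\tilde{y}_t}$ controls sizes, not increments: contraction must hold for arbitrary fixed $\tilde{y} \in \ell_{2e}$, and the Lipschitz property requires $\norm{d_x^1 - d_x^2}_T$ to be bounded by $\norm{\tilde{y}^1 - \tilde{y}^2}_T$, whereas your bound yields only $L(\norm{\tilde{y}^1}_T + \norm{\tilde{y}^2}_T)$. Finally, ``expressing $\mathcal{Q}$ as the difference of two robust loops'' is not a defined operation for nonlinear systems and is never made precise. To complete the argument you must either adopt the paper's substitution outright (justifying exogeneity of $\Delta f_o$, e.g.\ by restricting to innovations-form observers), or supply a genuinely new estimate (e.g.\ second-order smoothness of $f_o$ plus a small-gain condition); the cascade route you sketch cannot succeed as stated.
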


The proof follows by augmenting the robustly stabilizing controller $\mathcal{K}$ with an observer $\mathcal{O}$ to form a map $\mathcal{Q}_\mathcal{K}: (r, \tilde{y}) \mapsto \tilde{u}$, which is contracting and Lipschitz by comparison with the closed-loop system under $\mathcal{K}$ (see appendix).

\begin{remark}
    It is worth to noting that the closed-loop system is not guaranteed to be contracting and Lipschitz under bounded but unknown additive disturbances (see Example~\ref{eg:counterexample}). This is because the observer error does not converge to zero in the presence of unknown disturbances, which is the key property required in Theorem~\ref{thm:forward}. However, the closed-loop states will still be bounded under bounded additive disturbances \cite{Lohmiller+Slotine1998,Tsukamoto++2021}.
\end{remark}
\begin{example} \label{eg:counterexample}
    Consider the following scalar system
    \[
    \begin{split}
        \mathcal{G}:& \quad x_{t+1}= 0.5 \sin(x_t) + \tilde{u}_t + d_t\\
        \mathcal{K}_{\mathcal{Q}}:& \;
        \begin{cases}
            \hat{x}_{t+1} = 0.5 \sin(\hat{x}_t) + \tilde{u}_t \\
            \tilde u_t= 10\tilde y=10(x-\hat{x}_t)
        \end{cases} 
    \end{split}
    \]
    which is contracting for the case $d_t\equiv 0$. When $d_t\equiv 1$, the system converges to multiple solutions, hence the closed-loop system is not contracting under non-zero disturbances. 
\end{example}

%
%
\section{Numerical Experiments} \label{sec:numeric-experiments}

We now examine the performance of the Youla-REN policy class on two RL problems: 1) magnetic suspension, and 2) inverting a rotary-arm pendulum. Each system is nonlinear and partially-observed, with different base controller designs to test the policy class under different architectures. We compare performance and robustness of three policy types:
\begin{enumerate}
    \item \textit{Youla-REN}: uses a contracting REN for the Youla parameter $\mathcal{Q}$ (see \eqref{eqn:youla-ctrl} and \eqref{eqn:youla-param}).
    \item \textit{Youla-{\gren}}: uses a REN with a Lipschitz upper bound of $\gamma$ (where $\gamma \to \infty$ recovers the contracting REN).
    \item \textit{Feedback-LSTM}: an LSTM network \cite{Hochreiter+Schmidhuber1997} augmenting the base controller via direct feedback of the measurement output, with $\tilde{u} = \mathcal{F}(y)$ for an LSTM system $\mathcal{F}$.
\end{enumerate}
Since RENs are universal approximators of contracting and Lipschitz systems \cite[Prop.~2]{Wang++2022}, then by Theorem~\ref{thm:forward} the Youla-($\gamma$)REN parameterizes a set of contracting and Lipschitz closed-loops for partially-observed nonlinear systems.
The Feedback-LSTM form is commonly used in deep RL (e.g: \cite{Siekmann++2021a}) but provides no such stability or robustness guarantees. 
For more detail on contracting and Lipschitz parameterizations of RENs, see \cite{Revay++2023}. Our experiments were written in 
Julia using \texttt{RobustEquilibriumNetworks.jl} \cite{Barbara++2023b} and are available on github\footnote{
\url{https://github.com/nic-barbara/CDC2023-YoulaREN}
}.

\subsection{Problem setup}

\subsubsection{Learning objective}
Let $z_t = p(x_t)$ be a performance variable to be tracked for some function $p$. We formulated the RL tasks as minimizing a quadratic cost on the differences $\Delta z_t = z_t - z_\mathrm{ref}$, $\Delta u_t = u_t - u_\mathrm{ref}$ between performance variables and controls, and their desired reference values (respectively). That is, 
\begin{equation} \label{eqn:cost-func}
    \text{min.} \  E[J] \ \text{s.t.} \ J(x_0) = \sum_{t=0}^{T-1} \left( |\Delta z_t|^2_{Q} + |\Delta u_t|^2_{R} \right)
\end{equation}
where the cost function $J$ is weighted by matrices $Q$ and $R$. The expectation is over all possible initial conditions and random disturbances. We used $T = 100$ time samples.

\begin{figure}[!t]
    \centering
    \begin{subfigure}[b]{0.25\textwidth}
        \centering
        \includegraphics[width=\textwidth]{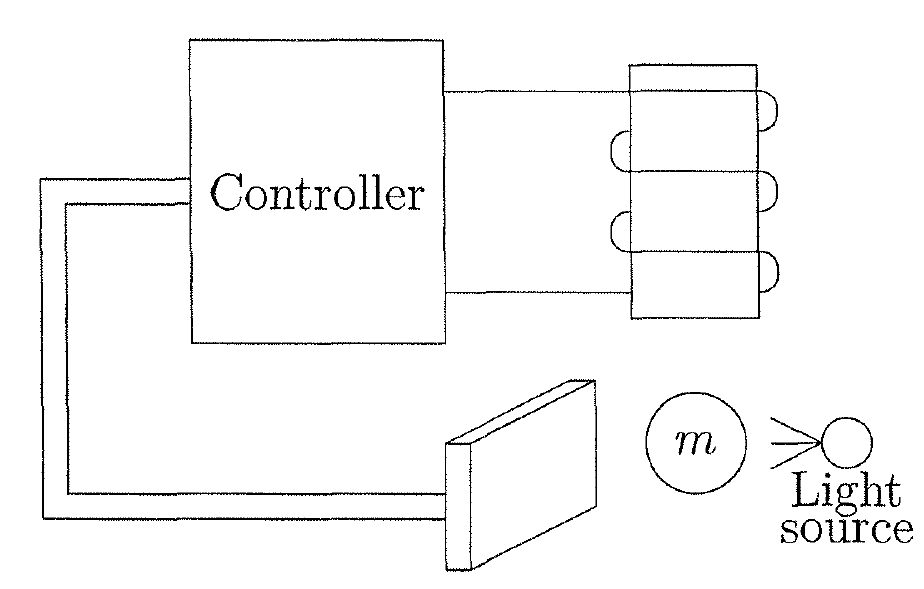}
        \caption{1D magnetic suspension.}
        \label{fig:mag-sketch}
    \end{subfigure}
    \begin{subfigure}[b]{0.2\textwidth}
        \centering
        \includegraphics[width=\textwidth]{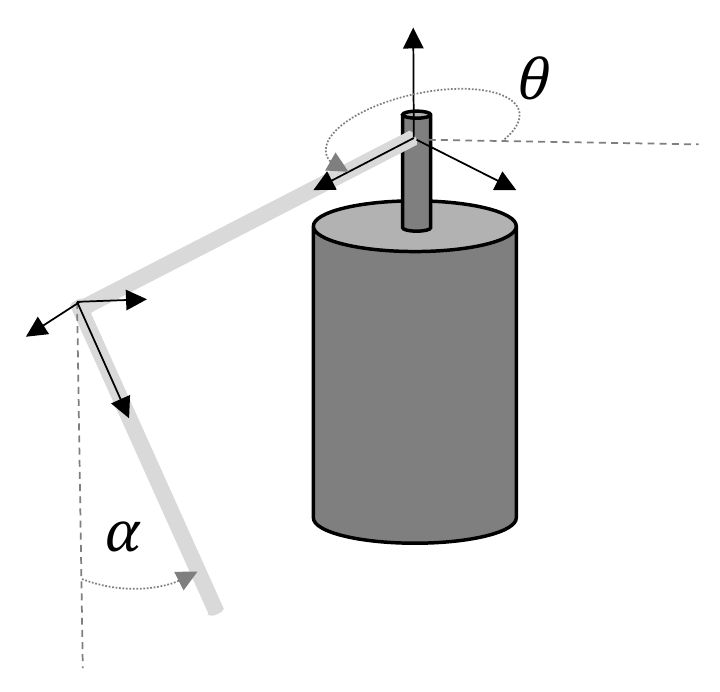}
        \caption{Rotary-arm pendulum.}
        \label{fig:qube-sketch}
    \end{subfigure}
    \caption{The two RL environments examined in Sec.~\ref{sec:numeric-experiments}: (a) a magnetic ball moving vertically under the influence of an electromagnet, and (b) a rotary-arm pendulum driven by a single motor on the rotating arm. The light source in (a) measures the ball position. Fig.~\ref{fig:mag-sketch} from \cite{Khalil2002}.}
    \label{fig:problem-setup}
\end{figure}

\subsubsection{Magnetic suspension}

Consider the one-dimensional magnetic suspension system presented in \cite{Khalil2002}, illustrated in Fig.~\ref{fig:mag-sketch}. The system has three states (ball position, velocity, and coil current) and one input (coil voltage). Only the ball position and coil current are measured. We used the same nonlinear system model as in Exercise 13.27 of \cite{Khalil2002}. We added random noise to all states and measurements with standard deviations $5\times10^{-4}$ and $10^{-3}$ (respectively).

The objective was to stabilize the ball at a height of 5\,cm with minimal control effort. Our base controller consisted of a high-gain observer (\cite[Sec. 14.5.2]{Khalil2002}) and a state-feedback controller designed with the backstepping and Lyapunov re-design methods outlined in \cite[Sec.~14.2-14.3]{Khalil2002}. 
We encoded the learning objective in \eqref{eqn:cost-func} with $z_t = x_t$, $Q = \mathrm{diag}(1/0.025^2, 0, 0)$, $R = 1/50^2$.

\subsubsection{Rotary-arm pendulum}

Next we considered the rotary-arm pendulum system in Fig.~\ref{fig:qube-sketch}. The system has four states (rod angles and angular velocities) and one control input (motor voltage). Only the angles are measured. The system dynamics are presented in \cite[Eqn.~31]{Cazzolato+Prime2011}. We added noise with standard deviation $10^{-2}$ to all states and measurements.

The control objective was to stabilize the pendulum in its (unstable) upright equilibrium, again with minimal control effort. We designed a state-feedback policy consisting of an energy-pumping controller to swing the pendulum arm upwards (\cite[Eqn.~8]{Astrom+Furuta2000}) and a static linear quadratic regulator to balance the pendulum within $30^\circ$ of the vertical. We completed the base controller with a high-gain observer. The learning objective was defined as per \eqref{eqn:cost-func} with $z_t = (\cos\theta_t, \sin\theta_t, \cos\alpha_t, \sin\alpha_t)^\top$ for the arm and pendulum angles $\theta_t, \alpha_t$ (respectively), and $Q = \mathrm{diag}(5, 5, 10, 10)$, $R = 0.01$. For small deviations from vertical, this is approximately a quadratic cost on $\Delta \theta_t, \Delta \alpha_t$.


\subsection{Results and discussion}

\begin{figure*}[!t]
    \centering
    \begin{subfigure}[b]{0.47\textwidth}
        \centering
        \includegraphics[width=\textwidth]{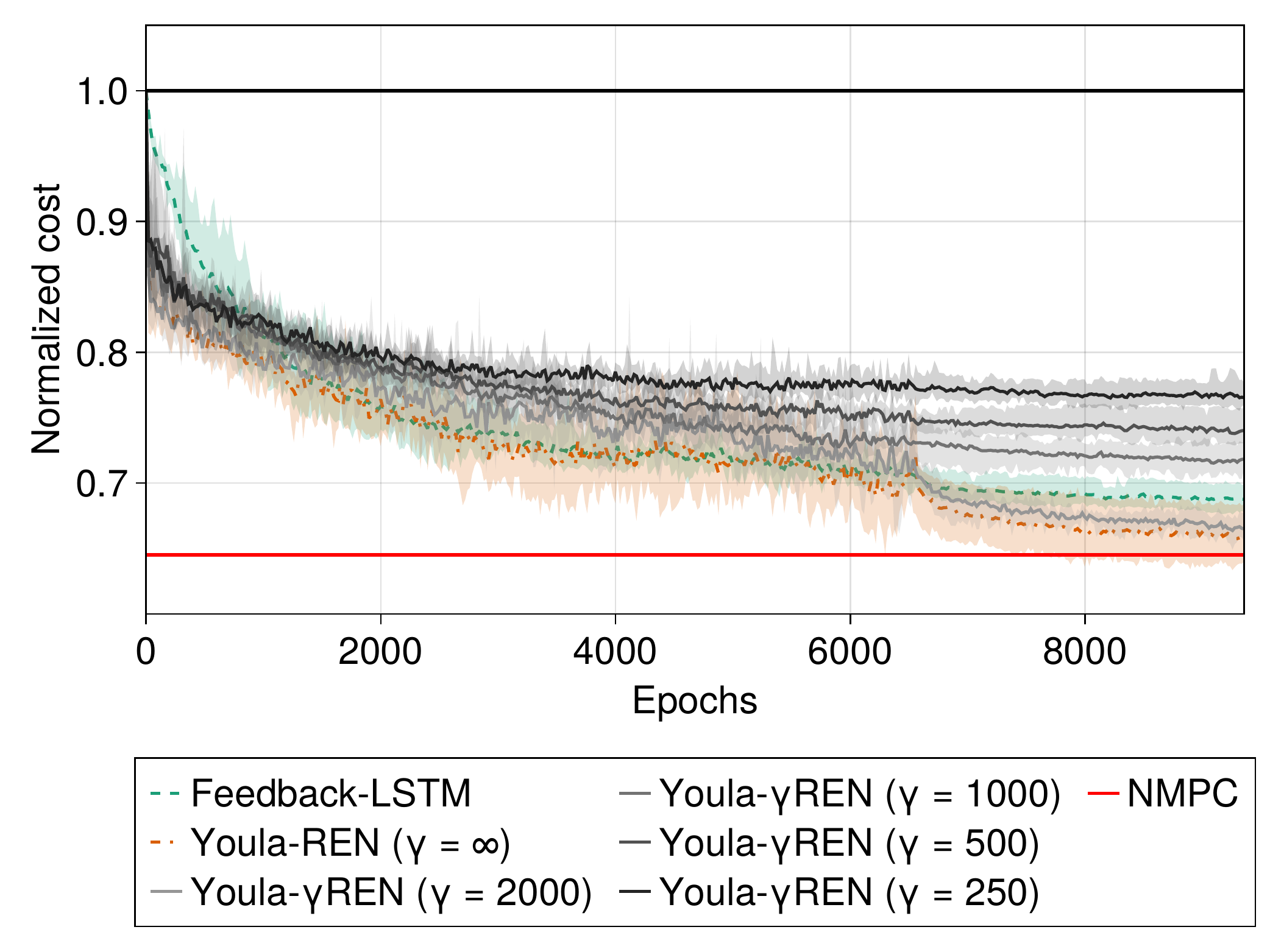}
        \caption{Magnetic suspension.}
        \label{fig:mag-learning}
    \end{subfigure}
    \begin{subfigure}[b]{0.47\textwidth}
        \centering
        \includegraphics[width=\textwidth]{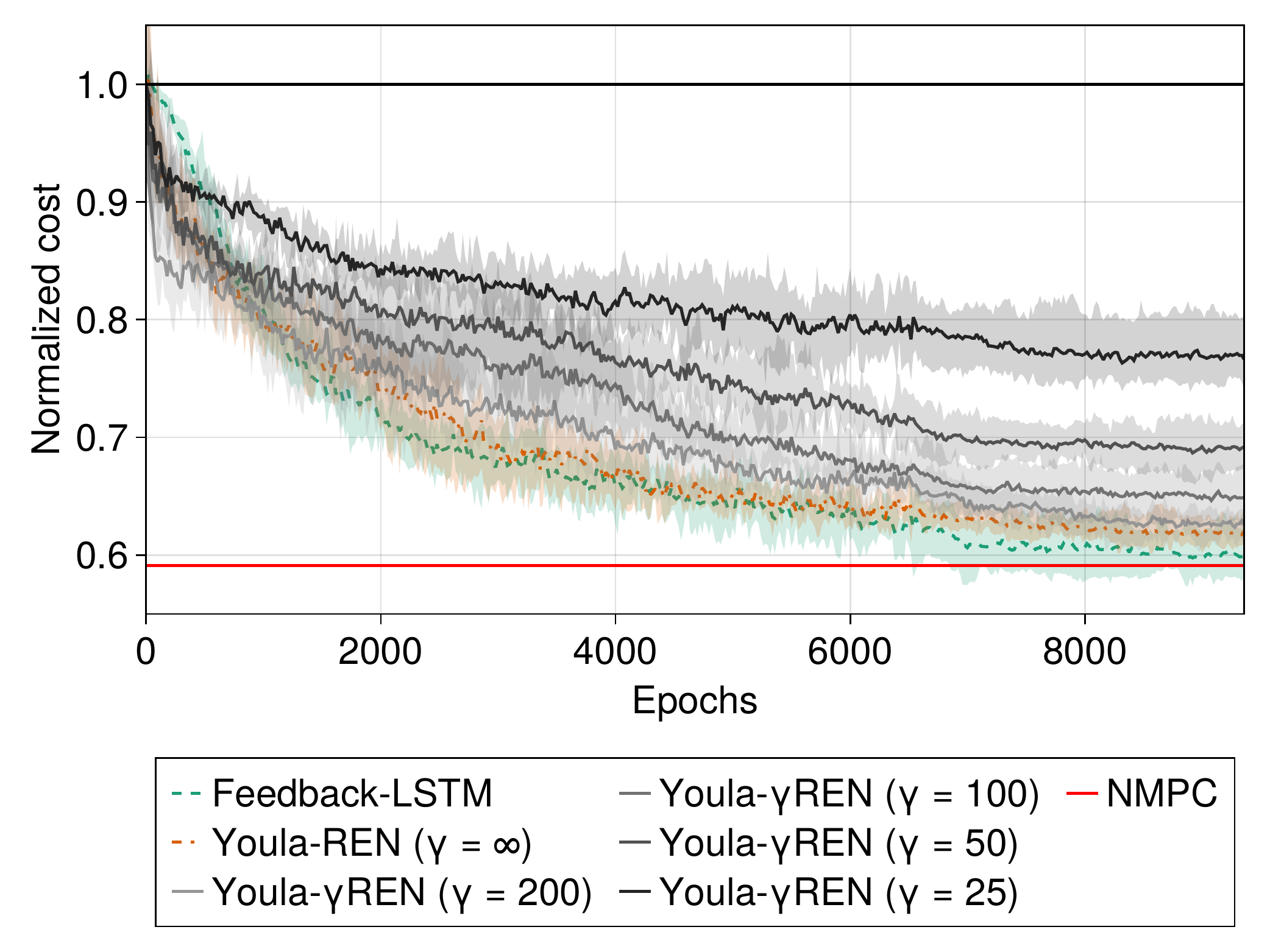}
        \caption{Rotary-arm pendulum.}
        \label{fig:qube-learning}
    \end{subfigure}
    \caption{Loss curves showing the mean test cost vs. training epochs for policies trained on (a) the magnetic suspension problem and (b) the rotary-arm pendulum problem. Colored bands show the range over the six random seeds for each model, lines show the mean. Test cost is normalized by the base controller cost, indicated by the solid black line at 1.0.}
    \label{fig:cost-curves}
\end{figure*}

\subsubsection{Learning performance}

Fig.~\ref{fig:cost-curves} shows the mean test cost for each policy trained on the magnetic suspension and rotary-arm pendulum RL tasks. Policies were benchmarked against nonlinear model predictive controllers (NMPC), which are easy to design for low-dimensional problems. All of the learned policies show significant performance improvements over the base controller, and the best performing models reach the NMPC benchmarks. In particular, we see comparable performance between the Youla-REN and the Feedback-LSTM policy classes, with the Youla-REN achieving a lower cost on magnetic suspension, and the Feedback-LSTM performing slightly better on the rotary-arm pendulum. Together with the results of Sec.~\ref{sec:theory}, we therefore have a policy class that can perform as well as existing state-of-the-art methods on RL tasks for partially-observed nonlinear systems, while also providing stability guarantees for every policy trialled during training. Note that we have not compared the Youla-REN to an LSTM or REN in direct feedback without a base controller (the typical RL policy architecture) in Fig.~\ref{fig:cost-curves}. Training models in this configuration took significantly more epochs than the Youla and Feedback architectures, and achieved a worse final cost than even the base controllers on both tasks.

\subsubsection{Robustness}

One of the great advantages of the Youla-REN policy class is that we can control the performance-robustness trade-off by imposing a Lipschitz upper bound on the REN. The left panels in Figs.~\ref{fig:mag-robust} and \ref{fig:qube-robust} show the effect of perturbing each trained policy with additive adversarial attacks of increasing size on the measurement signal $y_t$. The scatter plots to the right show the attack size required to meaningfully perturb each closed-loop system. We define a ``critical'' attack as one that shifts 1) the average ball position more than 1\,cm from the target and 2) the average pendulum angle more than $30^\circ$ from the vertical in the magnetic suspension and rotary-arm pendulum tasks, respectively. Adversarial attacks were computed with minibatch gradient descent over a receding horizon of 10 time samples. 

Comparing Figs.~\ref{fig:cost-curves} and \ref{fig:robustness} demonstrates the effect of the Lipschitz bound on the performance-robustness trade-off. In Fig.~\ref{fig:cost-curves}, imposing a stronger Lipschitz bound (smaller $\gamma$) drives the Youla-{\gren} policies to worse final costs. In Fig.~\ref{fig:robustness}, however, stronger Lipschitz bound can reliably lead to policies which are more robust to adversarial attacks, even if they perform worse in the unperturbed case. In particular, Fig.~\ref{fig:qube-robust} shows that the base controller and Feedback-LSTM policies are highly sensitive to adversarial attacks in the rotary-arm pendulum environment. We suspect this is because the system has enough degrees of freedom to exhibit chaotic motion, and can be driven to extremely unstable closed-loop responses that are more difficult to recover from than in the magnetic suspension environment. Note that the relationship is not linear --- imposing too strong a Lipschitz bound can lead to less robust policies (for example $\gamma = 25$ in Fig.~\ref{fig:qube-robust}). In practice, careful tuning of the imposed Lipschitz upper bound is required for a given problem.

\subsubsection{Key results}

These results emphasize the strength of the Youla-REN policy class in learning-based control tasks. We can take an existing stabilizing controller for a (nonlinear) dynamical system and learn a robust stabilizing feedback controller that improves some user-defined performance metric. Moreover, we can search over a space of contracting and Lipschitz closed-loops, guaranteeing stability at every step of the training process, and still achieve similar performance to existing methods which provide no such guarantees. Our approach is intuitive in that we can balance the performance-robustness trade-off by tuning the Lipschitz bound of the REN. It is versatile since we do not require special solution methods or projections during training. We therefore expect the Youla-REN to be well-suited to learning-based control in safety-critical robotic systems where performance and robustness are crucial to successful operation.

\begin{figure*}[!t]
    \centering
    \begin{subfigure}[b]{0.69\textwidth}
        \centering
        \includegraphics[width=\textwidth]{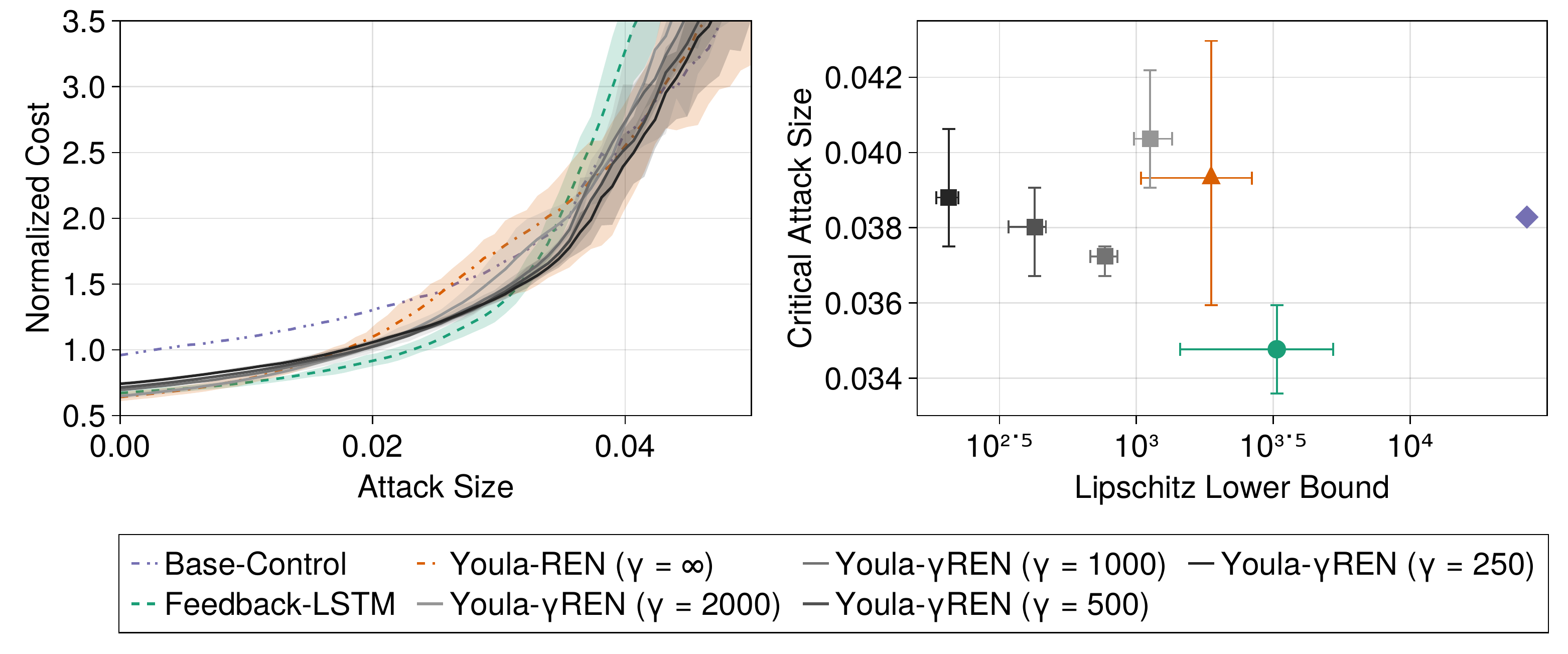}
        \caption{Magnetic suspension.}
        \label{fig:mag-robust}
    \end{subfigure}
    \begin{subfigure}[b]{0.69\textwidth}
        \centering
        \includegraphics[width=\textwidth]{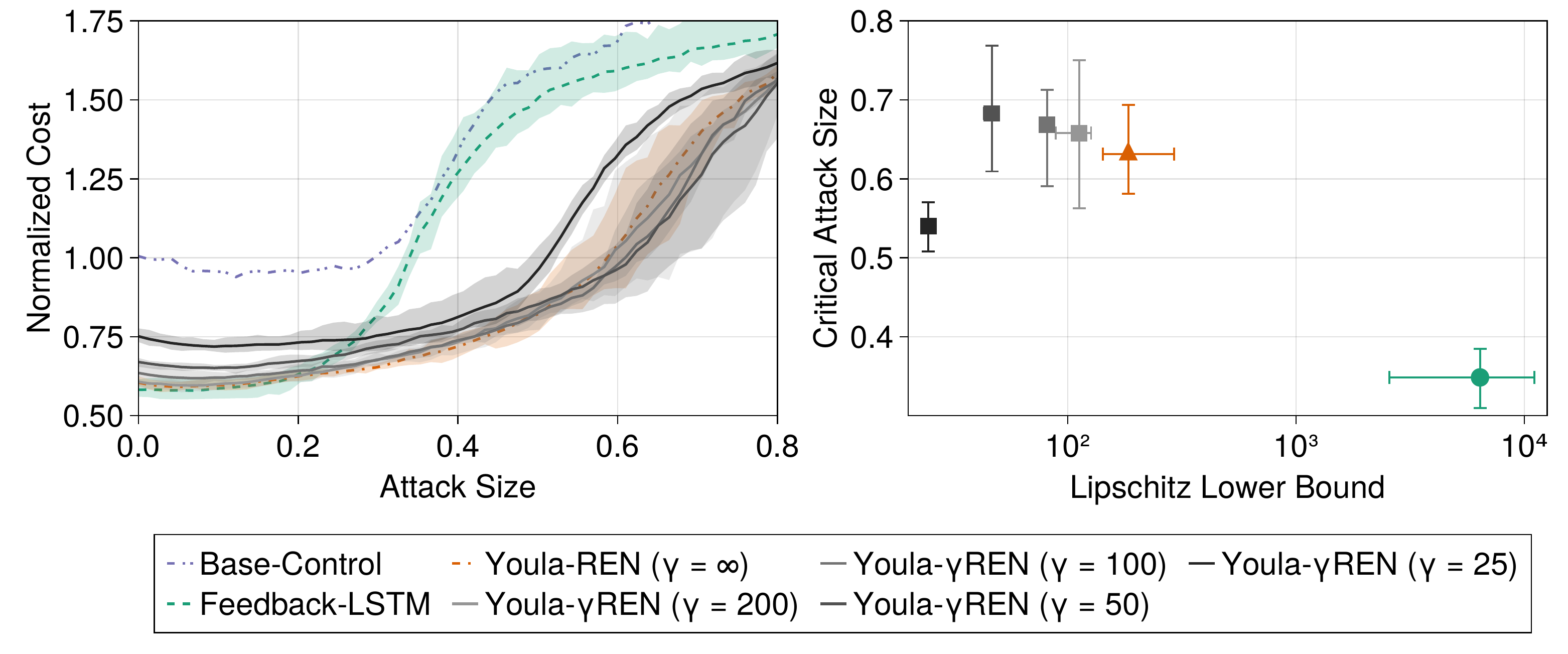}
        \caption{Rotary-arm pendulum.}
        \label{fig:qube-robust}
    \end{subfigure}
    \caption{Normalized test cost vs. adversarial attack size for each of the trained models in Fig.~\ref{fig:cost-curves}. Lines show the average cost over the six random seeds for each model, with bands showing the range. Test cost is normalized by the cost of the (unperturbed) base controller. Scatter plots show the ``critical'' attack size required to perturb (a) the mean ball position 1\,cm from its target and (b) the mean pendulum angle to $30^\circ$ from the vertical in each task, respectively. Error bars are for the six different random seeds for each model. The estimated Lipschitz lower bound of the base controller in (b) is extremely large and has been omitted from the scatter plot. Values of $\gamma$ in the legend are Lipschitz upper bounds.}
    \label{fig:robustness}
\end{figure*}

\section*{Acknowledgements}
The authors would like to thank Professor Alexandre Megretski for his insightful discussions in revising this paper.

%
%

\appendix
\section*{Proofs of theorems} 
We first prove three auxiliary results. Each of these results are well-known and commonly referred to in the literature in various forms \cite{Lohmiller+Slotine1998,Tsukamoto++2021,Bullo2022,Tran++2019}. We provide proofs of the particular discrete-time statements required in the proof of Theorem~\ref{thm:forward} for completeness.

\begin{lemma} \label{lem:contracting-uniform}
Consider a contracting system $\mathcal{T}$ given by \eqref{eqn:generic-system}. Then for any fixed input sequence $u \in \ell^m_{2e}$, any two state trajectories $x^1, x^2 \in \ell_{2e}^n$ exponentially converge to each other with a uniform rate and overshoot.
\end{lemma}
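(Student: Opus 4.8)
The plan is to directly connect the Lyapunov-type certificate of Definition~\ref{dfn:contraction} to the trajectory bound of Definition~\ref{dfn:contraction-transients}. First I would fix an arbitrary input sequence $u \in \ell_{2e}^m$ together with two state trajectories $x^1, x^2 \in \ell_{2e}^n$ generated from initial conditions $x_0^1, x_0^2$. Since $\mathcal{T}$ is contracting, the \emph{same} smooth function $V$ satisfies both \eqref{eqn:contract-1} and \eqref{eqn:contract-2} for this $u$, which is the property I will exploit to obtain uniformity.

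The key step is to iterate the one-step decay \eqref{eqn:contract-2}. A routine induction on $t$ gives $V(x^1_t, x^2_t) \le \alpha^t V(x^1_0, x^2_0)$ for all $t \in \mathbb{N}$. Sandwiching this between the two-sided bounds \eqref{eqn:contract-1} --- the lower bound on the left and the upper bound evaluated at $t=0$ on the right --- yields
\begin{equation*}
    c_1 \abs{x^1_t - x^2_t}^2 \le V(x^1_t, x^2_t) \le \alpha^t V(x^1_0, x^2_0) \le c_2\, \alpha^t \abs{x^1_0 - x^2_0}^2 .
\end{equation*}
Taking square roots and using $\alpha \in [0,1)$ then gives
\begin{equation*}
    \abs{x^1_t - x^2_t} \le \sqrt{\tfrac{c_2}{c_1}}\, \abs{x^1_0 - x^2_0}\, \big(\sqrt{\alpha}\big)^t ,
\end{equation*}
which is exactly the form \eqref{eqn:contraction} with contraction rate $\sqrt{\alpha} \in [0,1)$ and overshoot $\beta(x_0^1, x_0^2) = \sqrt{c_2/c_1}\,\abs{x_0^1 - x_0^2}$.

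The final step is to establish the \emph{uniformity} claimed in the statement. Both the rate $\sqrt{\alpha}$ and the multiplicative factor $\sqrt{c_2/c_1}$ depend only on the contraction constants $\alpha, c_1, c_2$ of Definition~\ref{dfn:contraction}, and in particular on neither the chosen input $u$ nor the particular pair of trajectories. The argument is essentially mechanical once the iteration is set up; the only point requiring care --- and what I regard as the main (mild) obstacle --- is verifying that Definition~\ref{dfn:contraction} supplies a single certificate $V$ with fixed constants that is valid for \emph{every} fixed input sequence, so that no input- or trajectory-dependence leaks into the rate or the overshoot factor.
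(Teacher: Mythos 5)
Your proposal is correct and follows essentially the same argument as the paper's proof: iterate \eqref{eqn:contract-2} to get $V(x^1_t,x^2_t)\le\alpha^t V(x^1_0,x^2_0)$, sandwich with \eqref{eqn:contract-1}, and take square roots to obtain rate $\sqrt{\alpha}$ and overshoot $\sqrt{c_2/c_1}\,\abs{x^1_0-x^2_0}$. Your closing remark on uniformity is also consistent with Definition~\ref{dfn:contraction}, which supplies a single $V$ with fixed constants valid for every fixed input sequence.
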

\begin{proof}
Since $\mathcal{T}$ is contracting, it satisfies \eqref{eqn:contract-1} and \eqref{eqn:contract-2} with $c_2 \ge c_1 > 0$ and $\alpha \in [0,1)$.
Repeatedly applying \eqref{eqn:contract-2} gives $V(x^1_t, x^2_t) \le \alpha^t V(x^1_0, x^2_0)$ for all $t\in\mathbb{N}$ where $x^1_0, x^2_0 \in \mathbb{R}^n$ are the initial states of $x^1, x^2$. Applying the upper and lower bounds in \eqref{eqn:contract-1} then dividing through by $c_1$, we have
\begin{equation}
    \abs{x^1_t - x^2_t} \le \beta \alpha^\frac{t}{2} \abs{x^1_0 - x^2_0} \ \forall t\in\mathbb{N}
\end{equation}
where $\beta = \sqrt{c_2/c_1}$ is the overshoot.
\end{proof}

\begin{lemma} \label{lem:bounded-states}
    Consider a contracting system $\mathcal{T}$ given by \eqref{eqn:generic-system}. Then for any $x_0 \in \mathbb{R}^n$ and bounded inputs $u\in\ell_{2e}^m$, the states $x_t$ will also be bounded for all $t\in\mathbb{N}$.
\end{lemma}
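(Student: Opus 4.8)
The plan is to reduce the claim to bounding a single reference trajectory and then control that trajectory by a ``telescoping of restarts'' argument that keeps all estimates inside the incremental Lyapunov function $V$ until the very end. Fix an arbitrary reference point $x^\star \in \mathbb{R}^n$ (e.g.\ the origin) and, for each $k \in \mathbb{N}$, let $y^{(k)}$ denote the trajectory of $\mathcal{T}$ under the given input $u$ that is \emph{restarted} at time $k$, i.e.\ $y^{(k)}_k = x^\star$ and $y^{(k)}_{t+1} = f(y^{(k)}_t, u_t)$ for $t \ge k$. Since $u$ is bounded, say $\sup_t |u_t| =: \rho < \infty$, and $f$ is continuous (indeed Lipschitz under \ref{A4}), the one-step displacement of the reference is uniformly bounded: $M := \sup_{|v| \le \rho} |f(x^\star, v) - x^\star| < \infty$ by continuity of $f$ on the compact ball.

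The key observation is that $y^{(t)}_t = x^\star$ by construction, so I can write the telescoping identity $x_t - x^\star = (x_t - y^{(0)}_t) + \sum_{k=0}^{t-1}\left(y^{(k)}_t - y^{(k+1)}_t\right)$, in which every difference is between two genuine trajectories sharing the input $u$ --- exactly the situation in which the one-step contraction inequality \eqref{eqn:contract-2} applies. Chaining \eqref{eqn:contract-2} from the relevant restart times gives $V(x_t, y^{(0)}_t) \le \alpha^t V(x_0, x^\star)$ (this first term is precisely the uniform convergence from Lemma~\ref{lem:contracting-uniform}) and $V(y^{(k)}_t, y^{(k+1)}_t) \le \alpha^{t-k-1} V(f(x^\star, u_k), x^\star)$ for $t \ge k+1$, where the two states in the latter first coincide in cost at the restart time $k+1$. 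Converting to the Euclidean norm only \emph{after} this chaining, via \eqref{eqn:contract-1}, and using $|f(x^\star, u_k) - x^\star| \le M$, each term is controlled by $\beta\,\alpha^{(\cdot)/2}$ with $\beta = \sqrt{c_2/c_1}$.

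Summing the resulting geometric series then yields a bound uniform in $t$:
\[
|x_t - x^\star| \le \beta \alpha^{t/2} |x_0 - x^\star| + \beta M \sum_{j=0}^{t-1} \alpha^{j/2} \le \beta |x_0 - x^\star| + \frac{\beta M}{1 - \sqrt{\alpha}},
\]
where I substituted $j = t-k-1$ in the sum. Hence $|x_t| \le |x^\star| + \beta |x_0 - x^\star| + \beta M/(1-\sqrt{\alpha})$ for all $t \in \mathbb{N}$, which proves the claim.

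I expect the main obstacle to be conceptual rather than computational: a fixed reference point $x^\star$ is not itself a trajectory of $\mathcal{T}$, so \eqref{eqn:contract-2} cannot be applied directly to compare the state with a constant sequence. The naive fix of passing to the Euclidean metric at every step loses a factor $\beta = \sqrt{c_2/c_1} \ge 1$ per step, which may compound (since $\sqrt{\alpha}\,\beta$ need not be less than one) and fail to give boundedness. The restarted-trajectory telescoping is exactly what avoids this: it only ever invokes contraction between honest trajectories and defers the lossy $V$-to-norm conversion until after the geometric sum, so the conditioning factor $\beta$ enters once rather than accumulating.
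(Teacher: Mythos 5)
Your proof is correct, and it takes a genuinely different route from the paper's. The paper splits the driven dynamics into a zero-input contracting system plus a bounded additive disturbance $d(x_t,t) = f(x_t,u_t) - f(x_t,0)$, then leans on heavy external machinery: a converse contraction theorem (\cite{Tran++2019}, \cite{Rugh1996}) to manufacture a state-dependent metric $M(x_t) = \Theta(x_t)^\top\Theta(x_t)$ from the incremental Lyapunov function $V$, followed by the perturbation bound of \cite[Thm.~2.8]{Tsukamoto++2021} to conclude bounded state deviations; that route requires differentiability of $f$ in $x$ and citation of the equilibrium-convergence property from \cite{Lohmiller+Slotine1998}. Your restart-and-telescope argument is self-contained: it uses only the defining inequalities \eqref{eqn:contract-1}--\eqref{eqn:contract-2} applied to honest pairs of trajectories sharing the input (exactly what the restarted trajectories $y^{(k)}$ provide), a triangle inequality, and a geometric series, and you correctly defer the lossy $V$-to-norm conversion until after the chaining so that the factor $\beta = \sqrt{c_2/c_1}$ enters once rather than compounding --- the pitfall you identify in your closing paragraph is real, and your construction is the standard and correct way around it. Your hypotheses are also weaker: you need only continuity (or Lipschitzness, consistent with \ref{A4}, which the paper's proof invokes as well) of $f$ at the single point $x^\star$ over a compact input ball to define $M$, not differentiability. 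One small point worth making explicit in a final write-up: applying \eqref{eqn:contract-2} to the pair $(y^{(k)}, y^{(k+1)})$ for $t \ge k+1$ implicitly uses that the \emph{same} $V$ certifies contraction under the time-shifted input sequence; this is licensed by Definition~\ref{dfn:contraction}, which posits a single $V$ valid for every input sequence, but it deserves a sentence. With that noted, both arguments establish the lemma; yours is the more elementary and, since it avoids the metric construction and differentiability, arguably the more robust of the two.
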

\begin{proof}
Re-write the dynamics of \eqref{eqn:generic-system} as 
\begin{equation} \label{eqn:dynamics-plusd}
    x_{t+1} = \tilde{f}(x_t) + d(x_t,t)
\end{equation}
where $\tilde{f}(x_t) := f(x_t, 0)$ and $d(x_t, t) := f(x_t, u_t) - f(x_t, 0)$. We note two key facts about the system \eqref{eqn:dynamics-plusd}:
\begin{enumerate}
    \item The system $x_{t+1} = \tilde{f}(x_t)$ is an autonomous (time-invariant) contracting system and thus it will converge to a unique and bounded equilibrium $x^\star \in \mathbb{R}^n$ \cite[Prop.~(v), Sec.~3.7]{Lohmiller+Slotine1998}.
    \item Since $u$ is bounded and $f$ is Lipschitz by assumption \ref{A4}, then $d$ is also bounded and there exists $\bar{d}\in\mathbb{R}^n$ such that $\abs{d(x_t,t)} \le \bar{d}$ for all $t\in\mathbb{N}$.
\end{enumerate}

Let $x^1, x^2\in\ell_{2e}^n$ be solutions of the unperturbed dynamics $x_{t+1} = \tilde{f}(x_t)$ and \eqref{eqn:dynamics-plusd} (respectively) with initial states $x^1_0, x^2_0 \in \mathbb{R}^n$. We know that $x^1_t \rightarrow x^*$ uniformly and exponentially by Lemma~\ref{lem:contracting-uniform}. Therefore, it remains to prove that $\abs{x^1_t - x^2_t}$ is bounded for all $t\in\mathbb{N}$ if $u$ is bounded.

We apply the result of \cite[Thm.~2.8]{Tsukamoto++2021} on \eqref{eqn:dynamics-plusd} --- that bounded additive perturbations to a contracting system generate bounded perturbations to its states --- simplified here for an autonomous contracting system. Suppose there exists a uniformly positive-definite matrix $M(x_t) = \Theta(x_t)^\top \Theta(x_t) \succ 0 \ \forall \, x_t \in \mathbb{R}^n$, where $\Theta: \mathbb{R}^n \rightarrow \mathbb{R}^{n\times n}$ is a nonsingular matrix-valued function, that satisfies
\begin{align}
    &a_1 I \preceq M(x_t) \preceq a_2 I \label{eqn:Tsukamoto-1}\\
    &{\pdv{\tilde{f}}{x}}(x_t)^\top M(x_{t+1}) {\pdv{\tilde{f}}{x}}(x_t) \preceq \sigma^2 M(x_t) \label{eqn:Tsukamoto-2}
\end{align}
where $a_2 \ge a_1 > 0$ and $\sigma \in [0,1)$. Then by \cite[Thm.~2.8]{Tsukamoto++2021},
\begin{equation} \label{eqn:Tsukamoto-bounded}
    \abs{x^1_t - x^2_t} \le C_1 \sigma^t + \frac{\bar{d}}{1 - \sigma} C_2
\end{equation}
for some $C_1, C_2 \in \mathbb{R}^+$ where $C_1$ depends on the initial conditions $x^1_0, x^2_0$.

A function $M(x_t)$ satisfying \eqref{eqn:Tsukamoto-1} and \eqref{eqn:Tsukamoto-2} exists and is well-defined for the autonomous contracting system $x_{t+1} = \tilde{f}(x_t)$, as explained below. Since the system is contracting, there exists a smooth function $V$ satisfying \eqref{eqn:contract-1} and \eqref{eqn:contract-2} with $c_2 \ge c_1 > 0$ and $\alpha \in [0,1)$. We can re-write \eqref{eqn:contract-2} as 
\begin{equation*}
    V(x^1_{t+1}, x^2_{t+1}) - V(x^1_t, x^2_t) \le -c_3 \abs{x^1_t - x^2_t}^2
\end{equation*}
using the lower bound in \eqref{eqn:contract-1}, where $c_3 = (1 - \alpha)c_1$. Hence by \cite[Thm.~11]{Tran++2019} and \cite[Thm.~23.3]{Rugh1996} (both simplified for an autonomous system) there exists a nonsingular matrix-valued function $\Theta: \mathbb{R}^n \rightarrow \mathbb{R}^{n \times n}$ and constants $\mu, \eta, \rho \in \mathbb{R}^+$ such that
\begin{align}
    \eta I \preceq \Theta(x_t)^\top \Theta(x_t) \preceq \rho I \label{eqn:Tran-1}\\
    F(x_t)^\top F(x_t) - I \preceq -\mu I \label{eqn:Tran-2}
\end{align}
for all $x_t\in\mathbb{R}^n$, where $\mu \in [0,1)$ and $F(x_t)$ is given by
\begin{equation}
    F(x_t) = \Theta(x_{t+1}) {\pdv{\tilde{f}}{x}} (x_t) \Theta(x_t)^{-1}.
\end{equation}
Defining $M(x_t) = \Theta(x_t)^\top \Theta(x_t)$, it is clear that \eqref{eqn:Tsukamoto-1} and \eqref{eqn:Tran-1} are identical if $a_1 = \eta, a_2 = \rho$, and \eqref{eqn:Tsukamoto-2} and \eqref{eqn:Tran-2} are equivalent if $\sigma^2 = 1 - \mu$. Therefore, \eqref{eqn:Tsukamoto-bounded} holds and the states of a contracting system $\mathcal{T}$ given by \eqref{eqn:generic-system} are bounded for bounded inputs $u\in\ell_{2e}^m$.
\end{proof}

\begin{lemma} \label{lem:exp-convergence}
Consider a contracting system 
\begin{equation} \label{eqn:contracting-system}
    \begin{cases}
        x_{t+1} = f(x_t, u_t, r_t) \\
        y_t = h(x_t, u_t, r_t)
    \end{cases}
\end{equation}
with state $x_t \in \mathbb{R}^n$, inputs $u_t\in\mathbb{R}^{m_1}$, $r_t \in \mathbb{R}^{m_2}$, and outputs $y\in\mathbb{R}^p$, where $f$ and $h$ are Lipschitz continuous. Further consider any fixed, bounded $r \in \ell_{2e}^{m_2}$ and any exponentially converging, bounded $u, v \in \ell_{2e}^{m_1}$ such that $\abs{u_t - v_t} \le b a^t$ with $b\in\mathbb{R}^+$ and $a\in[0,1)$. Then the corresponding state and output trajectories $x,z\in\ell_{2e}^n$ and $y,p\in\ell_{2e}^p$ exponentially converge to each other (respectively) and satisfy
\begin{align}
    \abs{x_t - z_t} &\le \beta_1(x_0,z_0) \sigma_1^t \label{eqn:exp-conv1} \\
    \abs{y_t - p_t} &\le \beta_2(x_0,z_0) \sigma_2^t \label{eqn:exp-conv2}
\end{align}
where $\beta_i : \mathbb{R}^n \times \mathbb{R}^n \rightarrow \mathbb{R}^+$ and $\sigma_i \in [0,1)$ for $i \in\{1,2\}$.
\end{lemma}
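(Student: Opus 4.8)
The plan is to treat $z$ (the trajectory driven by $v$) as a \emph{perturbed} trajectory of the $u$-driven dynamics, so that $x$ and $z$ become, respectively, the unperturbed and perturbed solutions of one contracting system subjected to an exponentially decaying additive disturbance. The exponential convergence then follows by refining the metric-based bound already established in Lemma~\ref{lem:bounded-states} to handle a decaying (rather than merely bounded) perturbation, and the output bound follows directly from Lipschitz continuity of $h$.

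First I would fix the input sequence $u$ and rewrite the $z$-dynamics as $z_{t+1} = f(z_t, u_t, r_t) + d_t$, where $d_t := f(z_t, v_t, r_t) - f(z_t, u_t, r_t)$. Since $f$ is Lipschitz by hypothesis, with constant $L_f$ in its input argument, and $\abs{u_t - v_t} \le b a^t$, the perturbation is exponentially small: $\abs{d_t} \le L_f b a^t$. Thus $x$ and $z$ are the unperturbed and perturbed trajectories of the contracting system $\xi_{t+1} = f(\xi_t, u_t, r_t)$ from initial conditions $x_0$ and $z_0$ respectively, with $r$ held fixed.

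Next I would invoke the contraction metric $M(\cdot) = \Theta(\cdot)^\top \Theta(\cdot)$ constructed exactly as in the proof of Lemma~\ref{lem:bounded-states} via \cite{Tran++2019, Rugh1996}, giving constants $\eta, \rho > 0$ and $\sigma \in [0,1)$ with $\eta I \preceq M \preceq \rho I$ together with the incremental estimate \eqref{eqn:Tsukamoto-2}. Following the same Tsukamoto-type argument behind \eqref{eqn:Tsukamoto-bounded} --- but retaining the time-varying perturbation inside the sum rather than bounding it by a constant $\bar d$ --- the metric distance $D_t$ between $x_t$ and $z_t$ satisfies a recursion of the form $D_{t+1} \le \sigma D_t + c\,\abs{d_t}$, whose solution is $D_t \le \sigma^t D_0 + c\sum_{k=0}^{t-1}\sigma^{t-1-k}\abs{d_k}$. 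Substituting $\abs{d_k} \le L_f b a^k$, the convolution $\sum_{k=0}^{t-1}\sigma^{t-1-k}a^k$ is bounded by a constant multiple of $\max(\sigma,a)^t$ when $a \ne \sigma$ and by $t\,\sigma^{t-1}$ when $a = \sigma$ (the latter absorbed into $\sigma_1^t$ for any rate $\sigma_1 \in (\sigma,1)$). Converting back to the Euclidean norm through $\eta I \preceq M \preceq \rho I$, and noting that $D_0$ is proportional to $\abs{x_0 - z_0}$, yields \eqref{eqn:exp-conv1} with rate $\sigma_1 = \max(\sigma,a)$ (adjusted in the resonant case) and overshoot $\beta_1(x_0,z_0)$ collecting $\abs{x_0 - z_0}$ and the fixed constants $b, L_f, c$. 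The output bound \eqref{eqn:exp-conv2} then follows immediately from Lipschitz continuity of $h$, since $\abs{y_t - p_t} \le L_h\bigl(\abs{x_t - z_t} + \abs{u_t - v_t}\bigr) \le L_h(\beta_1 \sigma_1^t + b a^t)$, which is of the required form with $\sigma_2 = \max(\sigma_1, a)$.

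The main obstacle is the middle step: justifying the metric-distance recursion $D_{t+1} \le \sigma D_t + c\,\abs{d_t}$ for a time-varying, decaying perturbation, and correctly passing between the geodesic/metric distance and the Euclidean norm. This is essentially a sharpening of \eqref{eqn:Tsukamoto-bounded}, so the technical content closely parallels Lemma~\ref{lem:bounded-states}; the only genuinely new elements are resolving the exponential convolution and handling the resonant case $a = \sigma$, both of which are elementary once the recursion is in place.
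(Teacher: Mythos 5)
Your proposal is correct in outline but takes a genuinely different route from the paper. The paper works directly with the contraction Lyapunov function $V$ of Definition~\ref{dfn:contraction}: it bounds the cross term $\abs{\Delta V} \le \gamma_V \gamma_f \abs{u_t - v_t}$ using \emph{local} Lipschitzness of $V$ (which is exactly why it first invokes Lemma~\ref{lem:bounded-states}, to confine trajectories to a bounded set where a finite $\gamma_V$ exists), obtains the recursion $V(x_{t+1},z_{t+1}) \le \alpha V(x_t,z_t) + \bar{b} a^t$, solves the same exponential convolution you describe, treats the resonant case $\alpha = a$ identically to your $a = \sigma$ case, and finally takes square roots, so its rate is $\sqrt{\max(\alpha,a)}$ rather than your $\max(\sigma,a)$. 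You instead push the input mismatch into the state equation (viewing $z$ as a disturbed copy of the $u$-driven system) and run the recursion on the metric distance $D_t$ induced by $M = \Theta^\top\Theta$, i.e.\ a sharpened, decaying-disturbance version of \eqref{eqn:Tsukamoto-bounded}. What your route buys: the recursion lives on a degree-one quantity, so no square root is needed, and you use global Lipschitzness of $f$ directly instead of local Lipschitzness of $V$, so Lemma~\ref{lem:bounded-states} enters only through its metric construction rather than its boundedness conclusion. What it costs: the step you yourself flag as the main obstacle, $D_{t+1} \le \sigma D_t + c\abs{d_t}$, is precisely where the technical content sits and you assert it rather than prove it; it does hold (it is the intermediate estimate inside \cite[Thm.~2.8]{Tsukamoto++2021} before the disturbance is majorized by a constant), but two points need care: (i) the metric obtained from \cite{Tran++2019,Rugh1996} for the input-driven system $\xi_{t+1} = f(\xi_t,u_t,r_t)$ is time-varying, not the autonomous one constructed in Lemma~\ref{lem:bounded-states}, and (ii) that construction requires differentiability of $f$ in $x$, whereas the lemma's hypotheses state only Lipschitz continuity (the paper's own chain carries the same implicit requirement through Lemma~\ref{lem:bounded-states}, so this is not a gap relative to the paper's standard, but it should be stated). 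Your output bound matches the paper's, correctly retaining the $\abs{u_t - v_t}$ contribution through the Lipschitz constant of $h$.
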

\begin{proof}
Since the system is contracting, there exists a smooth function $V$ satisfying \eqref{eqn:contract-1} and \eqref{eqn:contract-2} with $c_2 \ge c_1 > 0$ and $\alpha \in [0,1)$. First note that
\begin{align*}
V(x_{t+1}, z_{t+1}) &= V(f(x_t, u_t, r_t), f(z_t, v_t, r_t)) \\
&\le V(f(x_t, u_t, r_t), f(z_t, u_t, r_t)) + \abs{\Delta V}
\end{align*}
where we introduced $\Delta V = V(f(x_t, u_t, r_t), f(z_t, v_t, r_t)) - V(f(x_t, u_t, r_t), f(z_t, u_t, r_t))$. 
$V$ is locally Lipschitz (it is continuously differentiable) and all states of \eqref{eqn:contracting-system} are bounded as per Lemma~\ref{lem:bounded-states}. Hence, there exists some finite upper bound $\gamma_V\in \mathbb{R}^+$ for a given $z, u, v, r$ such that 
$$\abs{\Delta V} \le \gamma_V \abs{f(z_t, v_t, r_t) - f(z_t, u_t, r_t)} \quad \forall\,t\in\mathbb{N}.$$ 
Since $f$ is globally Lipschitz then $\abs{\Delta V} \le \gamma_V\gamma_f \abs{v_t - u_t}$ for some $\gamma_f \in \mathbb{R}^+$
and so $\abs{\Delta V} \le \bar{b} a^t$ where $\bar{b} = b \gamma_V\gamma_f$. Applying \eqref{eqn:contract-2}, we therefore have
\begin{equation} \label{eqn:v-convergence}
    V(x_{t+1}, z_{t+1}) \le \alpha V(x_t, z_t) + \bar{b} a^t.
\end{equation}
It is clear from \eqref{eqn:v-convergence} that $V$ is upper-bounded by the particular solution of a two-state linear system 
$$q_{t+1} = \mqty[\alpha & \bar{b}\\ 0 & a] q_t \quad \text{with} \quad q_0 = \mqty[V(x_0,z_0)\\ 1].$$
This system is stable since $\alpha, a \in [0,1)$, and hence $V$ will be exponentially upper-bounded for all $t$. In fact, we can compute the analytic solution of this system. Using the fact that $\sum_{k=0}^{t-1} \alpha^{t-k} a^k = (\alpha^t - a^t)/(\alpha - a)$, we have
\begin{equation} \label{eqn:v-convergence2}
    V(x_t, z_t) \le \alpha^t V(x_0, z_0) + \bar{b}\cdot \frac{\alpha^t - a^t}{\alpha - a}.
\end{equation}
Applying the upper and lower-bounds on $V$ and dividing through by $c_1$ shows that
\begin{equation} \label{eqn:states-converge}
    \abs{x_t - z_t}^2 \le \frac{c_2}{c_1} \alpha^t \abs{x_0 - z_0}^2 + \frac{\bar{b}}{c_1}\cdot \frac{\alpha^t - a^t}{\alpha - a}.
\end{equation}
Taking the square root gives an inequality in the form of \eqref{eqn:exp-conv1}, where $\beta_1 : \mathbb{R}^n \times \mathbb{R}^n \rightarrow \mathbb{R}^+$ and $\sigma_1 \in [0,1)$ take different values depending on $\alpha$ and $a$:
\begin{enumerate}
    \item If $\alpha \ne a$, then \eqref{eqn:exp-conv1} is satisfied with $\sigma_1 = \sqrt{\max(\alpha, a)}$ and $\beta_1^2 = \frac{c_2}{c_1}\abs{x_0 - z_0}^2 + \frac{\bar{b}}{c_1\abs{\alpha - a}}.$
    \item If $\alpha = a$, the sum of terms giving \eqref{eqn:v-convergence2} collapses to $\sum_{k=0}^{t-1} \alpha^t = t\alpha^t$ and \eqref{eqn:states-converge} becomes
    $$\abs{x_t - z_t}^2 \le \left(\frac{c_2}{c_1} \abs{x_0 - z_0}^2 + \frac{\bar{b}}{c_1} t \right) \alpha^t .$$
    The function on the right-hand side can be uniformly and exponentially upper-bounded by $\beta_1 \sigma_1^t$ where $\sigma_1 \in [\alpha, 1)$, $\beta_1 \in \mathbb{R}^+$ since the exponential function $(\sigma_1/\alpha)^t$ dominates the linear term for sufficiently large $t$. Hence there exists a $T \in \mathbb{R}^+$ such that \eqref{eqn:exp-conv1} is satisfied with $\beta_1^2 = \frac{c_2}{c_1} \abs{x_0 - z_0}^2 + \frac{\bar{b}}{c_1} T$ and $\sigma_1 \in [\alpha, 1)$.
\end{enumerate}

The result \eqref{eqn:exp-conv2} follows from \eqref{eqn:exp-conv1} and $\abs{u_t - v_t} \le b a^t$ by noting that the output function $h$ in \eqref{eqn:contracting-system} is Lipschitz.
\end{proof}

\noindent \begin{proof}\textit{(Theorem~\ref{thm:forward})}
Let $\bar{x}_t = (x_t^\top, \tilde{x}_t^\top, q_t^\top)^\top$ be the state of the closed-loop system under $\mathcal{K}_\mathcal{Q}$, where $\tilde{x}_t = x_t - \hat{x}_t$ is the observer error. The closed-loop system maps $r \mapsto z$ with
\begin{equation} \label{eqn:cl-dynamics}
\mathcal{G}_\mathrm{CL}:
\begin{cases}
    x_{t+1} = f(x_t, \bar{u}_t) \\
    \tilde{x}_{t+1} = f(x_t, \bar{u}_t) - f_o(x_t - \tilde{x}_t, \bar{u}_t, c(x_t)) \\
    q_{t+1} = f_q(q_t, r_t, c(x_t) - c(x_t - \tilde{x}_t)) \\
    z_t = (x_t^\top, (k(x_t - \tilde{x}_t) + \tilde{u}_t)^\top)^\top
\end{cases}
\end{equation}
where $\bar{u}_t = k(x_t - \tilde{x}_t) + \tilde{u}_t + r_t$ and $\tilde{u}_t = h_q(q_t, r_t, c(x_t) - c(x_t - \tilde{x}_t))$. We will show that $\mathcal{G}_\mathrm{CL}$ is contracting and Lipschitz with transients due to initial conditions as per Definitions~\ref{dfn:contraction-transients} and \ref{dfn:lipschitz-sys}, respectively.

\paragraph{Proof of contraction}
Consider the observer error $\tilde{x}$ for a single trajectory. Since the observer satisfies the correctness assumption \ref{A2}, we can write $\tilde{x} = \mathcal{O}_{x_0}(\bar{u}, y) - \mathcal{O}_{\hat{x}_0}(\bar{u}, y)$, where the subscripts on $\mathcal{O}$ distinguish between the initial system and observer states. The inputs $\bar{u}=u + r$ and $y$ are the same for both the true and estimated state trajectories $x$ and $\hat{x}$ (respectively). Therefore by Lemma~\ref{lem:contracting-uniform}, there exists a $\beta \in \mathbb{R}^+$ and $\alpha \in [0,1)$ such that $\abs{\tilde{x}_t} \le \beta \alpha^t \abs{\tilde{x}_0}$ since the observer is contracting by \ref{A3}.

Now consider any two state trajectories $\bar{x}^1, \bar{x}^2 \in \ell_{2e}$ of the closed-loop system $\mathcal{G}_\mathrm{CL}$ starting from initial states $\bar{x}^1_0, \bar{x}^2_0 \in \mathbb{R}^{2n+q}$ and given the same input sequence $r\in\ell^m_{2e}$. The observer errors exponentially converge to one another since
\begin{align*}
    \abs{\tilde{x}^1_t - \tilde{x}^2_t} \le \beta \alpha^t (\abs{\tilde{x}^1_0} + \abs{\tilde{x}^2_0}) =: \bar{\beta}(\tilde{x}^1_0,\tilde{x}^2_0) \alpha^t.
\end{align*}
The same is true for the innovations signals
\begin{equation}
    \abs{\tilde{y}^1_t - \tilde{y}^2_t} \le \gamma_c \bar{\beta}(\tilde{x}^1_0,\tilde{x}^2_0) \alpha^t,
\end{equation}
where $\gamma_c$ is the Lipschitz bound of the measurement function $c(\cdot)$ from \eqref{eqn:system-G}.

We now repeatedly apply Lemma~\ref{lem:exp-convergence} to prove the result. Since $\tilde{u} = \mathcal{Q}(r,\tilde{y})$ is contracting, $\abs{\tilde{y}^1_t - \tilde{y}^2_t}$ exponentially converges to zero, and $r^1 = r^2$ is bounded, then by Lemma~\ref{lem:exp-convergence}
\begin{equation*}
    \abs{\tilde{u}^1_t - \tilde{u}^2_t} \le \beta_q(q^1_0, q^2_0, \tilde{x}^1_0, \tilde{x}^2_0) \alpha_q^t
\end{equation*}
with $\beta_q \ge 0$ and $\alpha \in [0,1)$. By assumption \ref{A1}, the closed-loop system under the base controller mapping $\tilde{u},r \mapsto z$ is also contracting. Therefore, the states $(x_t^{i\top}, \tilde{x}_t^{i\top})^\top$ with $i\in\{1,2\}$ also exponentially converge to one another with uniform rate and an overshoot dependent on initial conditions, again by Lemma~\ref{lem:exp-convergence}. Hence there exists a $\beta_\mathrm{CL}(\bar{x}^1_0, \bar{x}^2_0) \ge 0$ and $\alpha_\mathrm{CL} \in [0,1)$ such that 
\begin{equation}
    \abs{\bar{x}^1_0 - \bar{x}^2_0} \le \beta_\mathrm{CL}(\bar{x}^1_0, \bar{x}^2_0) \alpha_\mathrm{CL}^t,
\end{equation}
so $\mathcal{G}_\mathrm{CL}$ is contracting with transients.

\paragraph{Proof of Lipschitz}
Consider two trajectories $z^1, z^2 \in \ell_{2e}$ of the closed-loop system starting from initial states $\bar{x}^1_0, \bar{x}^2_0 \in \mathbb{R}^{2n+q}$ with inputs $r_1, r_2 \in \ell^m_{2e}$, respectively. Denote $\Delta z = z^1 -z^2$ and similarly for all other variables.
By assumption \ref{A1}, the closed-loop system under the base controller is Lipschitz with respect to any inputs $(r+\tilde{u})\in\ell_{2e}^m$, hence $\exists \ \gamma \in \mathbb{R}^+$ and $\kappa_1(s^1_0, s^2_0) \ge 0$ such that
\begin{align*}
    \norm{\Delta z}_T &\le \gamma \norm{(r^1 + \tilde{u}^1) - (r^2 + \tilde{u}^2)}_T + \kappa_1(s^1_0, s^2_0) \\
    &\le \gamma \norm{\Delta r}_T + \gamma \norm{\Delta \tilde{u}}_T + \kappa_1(s^1_0, s^2_0).
\end{align*}
Since $\mathcal{Q}$ is Lipschitz by assumption, then similarly $\exists \ \gamma_{q_r}, \gamma_{q_y} \in \mathbb{R}^+$ and $\kappa_2(q^1_0, q^2_0) \ge 0$ such that
$$
\norm{\tilde{u}}_T \le \gamma_{q_r} \norm{\Delta r}_T + \gamma_{q_y} \norm{\Delta \tilde{y}}_T + \kappa_2(q^1_0, q^2_0)
$$
Note further that $\tilde{y} = c(x) - c(\hat{x})$ (where $c(\cdot)$ acts element-wise on signals) so $\norm{\Delta \tilde{y}}_T \le \gamma_c \norm{\Delta \tilde{x}}_T$ since $c(\cdot)$ is Lipschitz with bound $\gamma_c \in \mathbb{R}^+$ by assumption \ref{A4}. Combining these expressions, we have that
\begin{equation} \label{eqn:lip-z-halfway}
    \norm{\Delta z}_T \le \gamma (1 + \gamma_{q_r}) \norm{\Delta r}_T + \tilde{\gamma} \norm{\Delta \tilde{x}}_T + \kappa_3(\bar{x}^1_0, \bar{x}^2_0)
\end{equation}
where $\tilde{\gamma} = \gamma \gamma_{q_y} \gamma_c$ and $\kappa_3 = \kappa_1 + \gamma \kappa_2$. Since the observer error satisfies $\abs{\tilde{x}_t} \le \beta \alpha^t \abs{\tilde{x}_0}$ for any $\tilde{u},r$, then there exists a finite $\Gamma(\tilde{x}_0) \ge 0$ such that $\norm{\tilde{x}}_T \le \Gamma(\tilde{x}_0)$. Hence $\norm{\Delta \tilde{x}}_T \le \Gamma(\tilde{x}^1_0) + \Gamma(\tilde{x}^2_0)$ and substituting into \eqref{eqn:lip-z-halfway} gives
\begin{equation} \label{eqn:lip-proof}
    \norm{\Delta z}_T \le \bar{\gamma} \norm{\Delta r}_T + \kappa_4(\bar{x}^1_0, \bar{x}^2_0)
\end{equation}
where $\bar{\gamma} = \gamma (1 + \gamma_{q_r})$ and $\kappa_4 = \tilde{\gamma}(\Gamma(x^1_0, \hat{x}^1_0) + \Gamma(x^2_0, \hat{x}^2_0)) + \kappa_3$. The closed-loop system $\mathcal{G}_\mathrm{CL}$ is therefore Lipschitz with the effect of initial conditions captured by $\kappa_4 \ge 0$.
\end{proof} 

\begin{proof} \textit{(Theorem~\ref{thm:reverse})}
Consider the closed-loop system of the disturbed plant $\mathcal{G}_d$ from \eqref{eqn:plant-disturbed} in feedback with a robustly stabilizing feedback controller
    \begin{equation} \label{eqn:K-stable}
        \mathcal{K}: 
        \begin{cases}
            \phi_{t+1} = h(\phi_t, y_t) \\
            u_t = g(\phi_t, y_t)
        \end{cases}
    \end{equation}
with states $\phi_t\in\mathbb{R}^k$. The closed-loop system $\bar{\mathcal{G}}_\mathrm{CL}: (r, d_x, d_y) \mapsto z$ is given by
\begin{equation} \label{eqn:cl-perturbed}
\bar{\mathcal{G}}_\mathrm{CL}:
\begin{cases}
    x_{t+1} = f(x_t, g(\phi_t, c(x_t) + d_{y_t}) + r_t) + d_{x_t} \\
    \phi_{t+1} = h(\phi_t, c(x_t) + d_{y_t}) \\
    z_t = (x_t^\top, g(\phi_t, c(x_t) + d_{y_t})^\top)^\top.
\end{cases}
\end{equation}
The system $\bar{\mathcal{G}}_\mathrm{CL}$ is contracting and Lipschitz by assumption. We now aim to show that $\mathcal{K}$ can be represented by a Youla controller \eqref{eqn:youla-ctrl} -- \eqref{eqn:youla-param} parameterized by a contracting and Lipschitz $\mathcal{Q}_\mathcal{K}$. By augmenting $\mathcal{K}$ with a Lipschitz state-feedback controller $k(\cdot)$ and an observer satisfying assumptions $\ref{A2}$ to \ref{A4}, we can construct a map $\mathcal{Q}_\mathcal{K}: (r, \tilde{y}) \mapsto \tilde{u}$ using the fact that $y_t = \tilde{y}_t + c(\hat{x}_t)$ and $\tilde{u}_t = \bar{u}_t - k(\hat{x}_t)$. This gives
\begin{equation}
    \mathcal{Q}_\mathcal{K}:
    \begin{cases}
        \hat{x}_{t+1} = f_o(\hat{x}_t, g(\phi_t, c(\hat{x}_t) + \tilde{y}_t) + r_t, c(\hat{x}_t)) + \Delta f_o \\
        \phi_{t+1} = h(\phi_t, c(\hat{x}_t) + \tilde{y}_t) \\
        \tilde{u}_t = g(\phi_t, c(\hat{x}_t) + \tilde{y}_t) - k(\hat{x}_t).
    \end{cases}
\end{equation}
where $\Delta f_o := f_o(\hat{x}_t, u_t + r_t, c(\hat{x}_t) + \tilde{y}) - f_o(\hat{x}_t, u_t + r_t, c(\hat{x}_t))$.
Note that $\mathcal{Q}_\mathcal{K}$ does not change the control signal $u_t$ since $\tilde{u}_t + k(\hat{x}_t) = g(\phi_t, y_t)$ by construction. We show that $\mathcal{Q}_\mathcal{K}$ is contracting and Lipschitz by comparing it to $\bar{\mathcal{G}}_\mathrm{CL}$. Make the substitution $\hat{x}_t \leftrightarrow x_t$, $\tilde{y}_t \leftrightarrow d_{y_t}$, $\Delta f_o \leftrightarrow d_{x_t}$. Then the state dynamics of $\mathcal{Q}_\mathcal{K}$ and $\bar{\mathcal{G}}_\mathrm{CL}$ are identical after applying assumption \ref{A2}, and hence $\mathcal{Q}_\mathcal{K}$ is contracting.

It remains to show that $\mathcal{Q}_\mathcal{K}:(r,\tilde{y}) \rightarrow \tilde{u}$ is Lipschitz. We know $(r, d_x, d_y) \mapsto z$ is Lipschitz, and hence so too is $(r, \tilde{y}) \mapsto z$ following our transformation, noting that the norm of $\Delta f_o$ is linearly bounded by $\tilde{y}$ since $f_o$ is Lipschitz. Considering $z \mapsto \tilde{u}$, note that under $x \leftrightarrow \hat{x}$ then $z = (\hat{x}^\top, u^\top)^\top$ and $u = \tilde{u} + k(\hat{x})$ so $\tilde{u} = k([I, 0] z) - [0, I]z$ where $k(\cdot)$ acts element-wise on signals. Since $k(\cdot)$ is a Lipschitz function, then the map $z \mapsto \tilde{u}$ is Lipschitz and hence so too is $\mathcal{Q}_\mathcal{K}: (r, \tilde{y}) \mapsto \tilde{u}$.
\end{proof} 

\section*{Model configurations and training details}

We compared the three model architectures on each RL task outlined in Section~\ref{sec:numeric-experiments}. We selected four Lipschitz bounds for the Youla-{\gren}s on each task to compare the effect of imposing robustness constraints. Similar numbers of model parameters were used for a fair comparison. LSTM models were allocated 28 cell units, while the RENs and {\gren}s were given 32 states and 64 neurons. We found that reducing the dimensionality of one of the REN parameters (the $X$ matrix in \cite[Eqn.~23]{Revay++2023}) from ($128\times128$) to ($16\times128$) accelerated learning and improved performance in the Youla-REN and Youla-{\gren}s. ReLU activation functions were used for all RENs models.

We chose to use a separate, de-tuned observer to compute the innovations in the magnetic suspension task instead of the base observer. This slowed the contraction rate of the innovations sequence, giving the network more samples with a non-zero input signal to affect the controls. This does not violate any assumptions outlined in Sec.~\ref{sec:theory}, since our only requirements for the observer were only that it satisfied the correctness, contraction, and Lipschitz properties \ref{A2} to \ref{A4}.

We trained our models using a version of the \textit{Augmented Random Search} (ARS)-v1 algorithm from \cite{Mania++2018}. Each model was trained with six random seeds to account for variability in the initialization. We averaged gradient estimates over 16 perturbation directions at each training epoch, using batches of 50 random initial conditions to approximate the expected costs, and clipped gradients to an $\ell^2$ norm of 10. A batch size of 100 was used for the test cost. Models were trained with the ADAM optimizer \cite{Kingma+Ba2015}. Learning rates and ARS exploration magnitudes were tuned by sweeping over a wide range from $10^{-5}$ to $10^{-1}$. The best-performing combinations for each model and task are provided in Table~\ref{tab:learning-params}. Learning rates were decreased by a factor of 10 after 70\% of the total epochs to verify that the models had converged.

\begin{table}[!t]
    \vspace{3mm}
    \centering
    \begin{tabular}{|l|c|c|c|c|c|c|}
    \hline
    & \multicolumn{3}{c|}{Magnetic Suspension} & \multicolumn{3}{c|}{Rotary Pendulum} \\
    \hline
    Model & $\gamma$ & $\alpha$ & $\sigma$ & $\gamma$ & $\alpha$ & $\sigma$ \\
    \hline
    Feedback-LSTM   & -         & 0.01      & 0.01  & -         & 0.01 & 0.02 \\
    Youla-REN       & $\infty$  & 0.005     & 0.01  & $\infty$  & 0.01 & 0.05\\             
    \multirow{4}{*}{Youla-{\gren}}                
                    & 2000      &   0.005   & 0.01  & 200       & 0.01 & 0.05\\
                    & 1000      &   0.005   & 0.05  & 100       & 0.01 & 0.05\\
                    & 500       &   0.005   & 0.05  & 50        & 0.01 & 0.05\\
                    & 250       &   0.005   & 0.05  & 25        & 0.01 & 0.05\\
    \hline
    \end{tabular}
    \caption{Learning rates $\alpha$, ARS exploration magnitudes $\sigma$, and imposed Lipschitz upper bounds $\gamma$ on the two RL tasks.}
    \label{tab:learning-params}
\end{table}


\bibliographystyle{ieeetr}
\bibliography{references}

\end{document}